\documentclass{article}


\usepackage[nonatbib, final]{neurips_2022}
\usepackage{tablefootnote}




\usepackage[utf8]{inputenc} 
\usepackage[T1]{fontenc}    
\usepackage{hyperref}       
\usepackage{url}            
\usepackage{booktabs}       
\usepackage{amsfonts}       
\usepackage{nicefrac}       
\usepackage{microtype}      
\usepackage{xcolor}         

\usepackage[utf8]{inputenc}
\usepackage{amsmath}
\usepackage{amssymb, color}
\usepackage{amsthm}
\usepackage{url}
\usepackage{mathtools}
\usepackage{algorithm}
\usepackage[noend]{algpseudocode}
\usepackage{caption}
\usepackage{subcaption}
\usepackage{xfrac}
\usepackage[acronym]{glossaries}
\usepackage{pgfplots}
\usepackage{adjustbox}

 
\newtheorem{theorem}{Theorem}[section]
\newtheorem{lemma}[theorem]{Lemma}

\newtheorem{corollary}[theorem]{Corollary}
\newtheorem{definition}[theorem]{Definition}

\newenvironment{myproof}[3]{
    \begin{proof}[\normalfont \textbf{Proof of #1 #2.}]
        #3
       \end{proof}
}

\algrenewcommand\algorithmicfunction{{}}

\bibliographystyle{plain}
\title{Order-Invariant Cardinality Estimators Are Differentially Private}

%

\author{Charlie Dickens\\
  Yahoo\\
  \texttt{charlie.dickens@yahooinc.com}
  \And
  Justin Thaler\\
  Georgetown University\\
  \texttt{justin.thaler@georgetown.edu}
  \And 
  Daniel Ting \\ 
  Meta\footnote{This work was performed while the author was at Salesforce.} \\
  \texttt{dting@meta.com}
}

\newcommand{\ccal}{\ensuremath{\mathcal C}}
\newcommand{\dcal}{\ensuremath{\mathcal D}}

\newcommand{\kcal}{\ensuremath{\mathcal K}}

\newcommand{\scal}{\ensuremath{S}}

\newcommand{\Var}{\mathsf{Var}}

\newcommand{\defn}{:=}

\def\E{{\ensuremath{\mathbb E}}}

\newacronym{fm}{FM85}{Flajolet-Martin}
\newacronym{hll}{HLL}{HyperLogLog}



\begin{document}

\maketitle

\begin{abstract}
    We consider privacy in the context of streaming algorithms for cardinality estimation.
    We show that a large class of algorithms all satisfy $\epsilon$-differential privacy, 
    so long as (a) the algorithm is combined with a simple 
    down-sampling procedure, and (b) the input stream cardinality  
    is $\Omega(k/\epsilon)$. Here, $k$ is a certain parameter of the sketch
    that is always at most the sketch size in bits, but is typically much smaller.
    We also show that, even with no modification, algorithms in our
    class satisfy $(\epsilon, \delta)$-differential privacy,
    where $\delta$ falls exponentially with the stream cardinality. 
    Our analysis applies to essentially all popular cardinality estimation
    algorithms, and substantially generalizes and tightens privacy bounds from earlier works. 
    Our approach is faster and exhibits a better utility-space
    tradeoff than prior art.
\end{abstract}

\section{Introduction}

Cardinality estimation, or the distinct counting problem, is a fundamental data analysis task.
Typical applications are found in network traffic monitoring \cite{estan2003bitmap},
query optimization \cite{selinger1989access}, and counting unique search engine queries \cite{heule2013hyperloglog}.
A key challenge
is to perform this estimation in small space
while processing each data item quickly.
Typical approaches for solving this problem at scale involve data sketches such
as the \acrfull{fm} sketch \cite{flajolet1985probabilistic},
 \acrfull{hll} \cite{flajolet2007hyperloglog}, Bottom-$k$ 
 \cite{bar2002counting, cohen2007summarizing, beyer2009distinct}.
All these provide approximate cardinality estimates but use bounded space.

While research has historically focused on the accuracy, speed, and space usage of these sketches, recent work examines their privacy guarantees. These privacy-preserving properties have grown in importance as companies have built tools that can grant an appropriate level of privacy to different people and scenarios. The tools aid in satisfying users' demand for better data stewardship, 
while also ensuring compliance with regulatory requirements. 

We show that \emph{all} cardinality estimators in a class of 
hash-based, order-invariant sketches with bounded size are $\epsilon$-differentially private (DP) so long as the algorithm is combined with a simple 
down-sampling procedure and the true cardinality satisfies a mild lower bound. 
This lower bound requirement can be guaranteed 
to hold by inserting sufficiently many ``phantom elements'' into the stream when initializing the sketch. 
We also show that, even with no modification, algorithms in our
class satisfy $(\epsilon, \delta)$-differential privacy,
where $\delta$ falls exponentially with the stream cardinality.

Our novel analysis has significant benefits.
First, prior works on differentially private cardinality estimation have analyzed only specific sketches \cite{tschorsch2013algorithm, von2019rrtxfm, choi2020differentially, smith2020flajolet}. Moreover,
many of the sketches analyzed (e.g., \cite{tschorsch2013algorithm, smith2020flajolet}), while
reminiscent of sketches used in practice,
in fact
differ from practical sketches in important ways.
For example, 
Smith et al. 
\cite{smith2020flajolet} analyze
a \emph{variant} of \acrshort{hll} that Section \ref{s:hlls} shows has an update time that can be $k$ times slower than an \acrshort{hll} sketch with $k$ buckets.

While our analysis covers an entire class of sketches at once,
our error analysis improves upon prior work in many 
cases when specialized to specific sketches.
For example, our analysis yields tighter privacy bounds for HLL than the one given in \cite{choi2020differentially}, yielding both an $\epsilon$-DP guarantee, rather than
an $(\epsilon, \delta)$-DP guarantee, as well as tighter bounds on the failure probability $\delta$---see Section \ref{s:hlls} for details.
Crucially, the class of sketches
we analyze captures many (in fact, almost all to our knowledge)
of the sketches that are actually used in practice.
This means that
existing systems
can be used in contexts requiring 
privacy, either without modification
if streams are guaranteed to satisfy
the mild cardinality lower bound we require,
or with a simple pre-processing
step described if such cardinality lower bounds
may not hold. Thus, existing data infrastructure can be easily modified to provide DP guarantees, and in fact
existing sketches can be easily migrated to DP summaries.

\subsection{Related work}
\label{sec:related-work}
One perspective is that cardinality estimators cannot simultaneously preserve privacy and offer 
good utility \cite{desfontaines2019cardinality}.
However, this impossibility result applies only when an adversary 
However, this impossibility result applies only when an adversary can create and merge an arbitrary number of sketches, effectively observing an item's value many times. 
It does not address the privacy of one sketch itself.

Other works have studied more realistic models where either the hashes are public, but private noise is added to the sketch
\cite{tschorsch2013algorithm, mir2011pan, von2019rrtxfm},
or the hashes are secret \cite{choi2020differentially}
(i.e., not known to the adversary who is trying to ``break'' privacy). This latter setting turns out to permit less noisy cardinality estimates. 
Past works study specific sketches or a variant of a sketch.
For example, 
Smith et al. \cite{smith2020flajolet} 
show that an \acrshort{hll}-type sketch is
$\epsilon$-DP
while \cite{von2019rrtxfm} modifies the \acrshort{fm} sketch using coordinated sampling, which is also based on a private hash.
Variants of both models are analyzed by Choi et al. \cite{choi2020differentially},
and they show (amongst other contributions) a similar 
result to \cite{smith2020flajolet}, establishing that an \acrshort{fm}-type sketch is
differentially private.
Like these prior works, we focus on the setting when \emph{the hash functions are kept secret} from the adversary.
A related problem of differentially private estimation of cardinalities under set operations is studied by \cite{pagh2021efficient}, but they assume the inputs to each sketch are already de-duplicated. 

There is one standard caveat: following prior works
\cite{smith2020flajolet, choi2020differentially}
our privacy analysis assumes 
a perfectly random hash function. 
One can 
remove this assumption both in theory and practice by
using a cryptographic hash function. This will
yield a sketch that satisfies either a
computational variant of differential privacy called
SIM-CDP, or standard information-theoretic notions of differential privacy
under the assumption that the hash function
fools space-bounded computations \cite[Section 2.3]{smith2020flajolet}.

Other works also consider the privacy-preserving properties of common $L_p$ functions over data streams.
For $p=2$, these include fast dimensionality reduction \cite{blocki2012johnson,upadhyay2014randomness}
and least squares regression \cite{sheffet2017differentially}.
Meanwhile, for $0 < p \le 1$, frequency-moment estimation has also been studied \cite{wang2022differentially}.
Our focus is solely the cardinality estimation problem when $p=0$.

\subsection{Preliminaries}
More formally, we consider the following problem.
\paragraph{Problem Definition}
Let $\dcal = \{x_1, \ldots, x_n\}$ denote a stream of samples with each identifier $x_i$ coming from a 
large universe $U$, e.g., of size $2^{64}$.
The objective is to estimate the cardinality, or number of distinct identifiers, of $\dcal$ using an algorithm $S$ which is given privacy parameters
$\epsilon, \delta \ge 0$ and a space bound $b$, measured in bits.

\begin{definition}[Differential Privacy \cite{dwork2006calibrating}]
A randomized algorithm $S$ is $(\epsilon, \delta)$-differentially private
($(\epsilon, \delta)$-DP for short or if $\delta=0$,  \emph{pure} $\epsilon$-DP) if for any pair of data sets $\dcal, \dcal'$
that differ in one record and for all $\scal$ in the range of $S$,
$\Pr (S(\dcal') \in \scal) \le e^{\epsilon}  \Pr (S(\dcal) \in \scal) + \delta$
with probability over the internal randomness of the algorithm $S$. 
\end{definition}

Rather than analyzing any specific sketching algorithm, 
we analyze a natural class of randomized distinct counting sketches.
Algorithms in this class operate in the following manner: 
each time a new stream item $i$ arrives, 
$i$ is hashed using some uniform
random hash function $h$, and then $h(i)$
is used to update the sketch, i.e., the update procedure
depends only on $h(i)$, and is otherwise independent of $i$.
Our analysis applies to any such 
algorithm that depends only on the 
\emph{set} of observed hash values. Equivalently,
the sketch state is 
invariant both to the order in which stream items arrive,
and to item duplication.\footnote{
A sketch is \emph{duplication-invariant} if and only if its state when run on any stream
$\sigma$ is identical to its state when run on the stream $\sigma'$, in which
all elements of the stream $\sigma$ appear exactly once.
}
We call this class of algorithms \emph{hash-based, order-invariant} cardinality estimators.
Note that for any hash-based, order-invariant cardinality estimator,
the distribution of the sketch 
depends only on the cardinality of the stream.
All distinct counting sketches of which we are aware that are invariant to permutations of the input data are included in this class. 
This includes \acrshort{fm}, LPCA, Bottom-$k$, Adaptive Sampling, and HLL as shown in Section \ref{sec:example-sketches}.

\begin{definition}[Hash-Based, Order-invariant Cardinality Estimators]
Any sketching algorithm that depends only on the \emph{set} of hash values of stream items using a uniform random hash function is a \emph{hash-based  order-invariant cardinality estimator}.
We denote this class of algorithms by $\ccal$.
\end{definition}

We denote a sketching algorithm with internal randomness $r$ by $S_r$ (for hash-based algorithms, $r$ specifies the random hash function used). 
The algorithm takes a data set $\dcal$ and generates a data structure $S_r(\dcal)$ that is used to estimate the cardinality. We refer to this structure as the \emph{state of the sketch}, or simply the \emph{sketch}, and the values it can take by $s \in \Omega$.
Sketches are first initialized and then items are inserted into the sketch with an \texttt{add} operation that may or 
may not change the sketch state. 

The size of the sketch is a crucial constraint, and we denote the space consumption in bits by $b$.
For example, \acrshort{fm} consists of $k$ bitmaps of  length $\ell$. 
Thus, its state $s \in \Omega = \{0, 1\}^{k \times \ell}.$
Typically, $\ell = 32$, so that $b = 32 k$. 
Further examples are given in Section \ref{sec:example-sketches}.
Our goal is to prove such sketches are differentially private.

\label{s:fmdeets}

\section{Hash-Based Order-Invariant Estimators are Private}
\label{sec:sketches are private}

The distribution of any hash-based, order-invariant cardinality estimator 
depends only on the cardinality 
of the input stream, so without loss of generality
we assume the input is $\dcal=\{1, \dots, n\}$.
Denote the set $\dcal \backslash \{i\}$ by  $\dcal_{-i}$
for $i \in \dcal$ and 
a sketching algorithm with internal randomness $r$ by $S_r(\dcal)$.

By definition, for an $\epsilon$-differential privacy guarantee, we must show that the Bayes factor comparing the hypothesis $i \in \dcal$ versus $i \notin \dcal$ is appropriately bounded:
\begin{align}
\label{eqn:DP inequality}
    e^{-\epsilon} < \frac{\Pr_r(S_r(\dcal) = s)}{\Pr_r(S_r(\dcal_{-i}) = s)} < e^{\epsilon} \quad \forall s \in \Omega, i\in \dcal.
\end{align}

\textbf{Overview of privacy results.} 
The main result in our analysis bounds
the privacy loss of a hash-based, order-invariant sketch in terms of
just two sketch-specific quantities.
Both quantities intuitively capture how sensitive
the sketch is to the removal or insertion of a single
item from the data stream.

The first quantity is a bound $k_{max}$
on the number of items that would change the sketch if \emph{removed} from the stream. 
Denote the items whose removal from the data set changes the sketch by 
\begin{align} \label{eq:krdef}
  \kcal_r &\defn \{i \in \dcal : S_r(\dcal_{-i}) \neq S_r(\dcal) \}.
\end{align}
Denote its cardinality by $K_r \defn \left| \kcal_r \right|$
and the upper bound by $k_{max} = \sup_r K_r$. 

The second quantity is a bound on a "sampling" probability.
Let $\pi(s)$ be the probability that a newly \emph{inserted} item would change a sketch in state $s$,
\begin{equation}
    \label{eq:pi-defn}
    \pi(s) \coloneqq \Pr_r(S_r(\dcal) \neq S_r(\dcal_{-i})\, |\, S_r(\dcal_{-i}) = s).
\end{equation}
Although a sketch generally does not store explicit samples, 
conceptually, it can be helpful to think of $\pi(s)$ as the probability that an as-yet-unseen item $i$ gets ``sampled'' by a sketch in state $s$. 
We upper bound $\pi^* \coloneqq \sup_{s \in \Omega} \pi(s)$ to limit the influence of items added to the stream.

The main sub-result in our analysis
(Theorem \ref{thm:main result})
roughly states that the sketch is $\epsilon$-DP so long as 
(a) the sampling probability 
$\pi^* < 1-e^{-\epsilon}$ is small enough,
and (b) the stream cardinality $n > \frac{k_{max}}{e^{\epsilon}-1} = \Theta(k_{max}/\epsilon)$ is large enough.

We show Property (a) is a \emph{necessary} condition for any $\epsilon$-DP algorithm if the algorithm
works over data universes of unbounded size. 
Unfortunately, Property (a) does \emph{not} directly hold for natural 
sketching algorithms. 
But we show (Section \ref{s:actualdpalgs}) by applying a simple 
down-sampling procedure,
any hash-based, order-invariant 
algorithm can be modified to satisfy (a).

Furthermore, Section \ref{sec:example-sketches} shows
common sketches satisfy
Property (a) with high probability, thus providing $(\epsilon, \delta)$-DP guarantees for sufficiently large cardinalities. 
Compared to \cite{choi2020differentially}, these guarantees are tighter, more precise, and more general as they establish  the failure probability $\delta$ decays exponentially with $n$, provide explicit formulas for $\delta$, and apply to a range of sketches rather than just HLL.

\paragraph{Overview of the analysis.}
The definition of $\epsilon$-DP requires bounding the Bayes factor
in equation \ref{eqn:DP inequality}. 
The challenge is that the numerator and denominator may not be easy to compute by themselves. However, it is similar to the form of a conditional probability involving only one insertion. 
Our main trick  re-expresses this Bayes factor as a sum of conditional probabilities involving a single insertion. 
Since the denominator $\Pr_r(S_r(\dcal_{-i})=s)$ involves a specific item $i$ which may change the sketch, 
we instead consider the smallest item $J_r$ whose removal does not change the sketch. 
This allows us to re-express the numerator in terms of a conditional probability $\Pr_r(S(\dcal) = s \land J_r=j) = \Pr_r(J_r = j | S(\dcal_{-j})=s) \Pr_r( S(\dcal_{-j})=s)$
involving only a single insertion plus a nuisance term $\Pr_r( S(\dcal_{-j})=s)$. 
The symmetry of items gives that the nuisance term is equal to denominator 
$\Pr_r( S(\dcal_{-j})=s) = \Pr_r( S(\dcal_{-i})=s)$, thus allowing us to eliminate it.
\begin{lemma}
\label{lem:sketch-state}
Suppose $n > \sup_r K_r$. Then
 $\Pr_r(K_r = n) = 0$, and
\begin{equation}
\frac{\Pr_r(S_r(\dcal) = s)}{\Pr_r(S_r(\dcal_{-i}) = s)}  = 
\sum_{j \in \dcal} \Pr_r(J_r = j \,|\, S_r(\dcal_{-j}) = s). 
\label{eq:sketch-state}
\end{equation}
\end{lemma}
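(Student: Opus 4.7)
My plan is to handle the two claims separately, with the bulk of the work going into the identity.

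First, the claim $\Pr_r(K_r = n) = 0$ is immediate from the hypothesis: since $K_r \leq \sup_r K_r < n$ for every realization of $r$, the event $\{K_r = n\}$ is empty. The useful consequence is that $\dcal \setminus \kcal_r$ is nonempty almost surely, which makes the random variable $J_r = \min\{j \in \dcal : S_r(\dcal_{-j}) = S_r(\dcal)\}$ well-defined with probability one. This is the crucial ingredient for the second claim; without it the partition below fails to be exhaustive.

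For the identity, I would partition the event $\{S_r(\dcal) = s\}$ by the value of $J_r$. On $\{J_r = j\}$ one has $S_r(\dcal) = S_r(\dcal_{-j})$ by the defining property of $J_r$, so the joint event $\{S_r(\dcal) = s, J_r = j\}$ coincides with $\{S_r(\dcal_{-j}) = s, J_r = j\}$. Summing over $j \in \dcal$ and applying the chain rule yields
\begin{align*}
\Pr_r(S_r(\dcal) = s) = \sum_{j \in \dcal} \Pr_r\!\bigl(J_r = j \mid S_r(\dcal_{-j}) = s\bigr)\,\Pr_r\!\bigl(S_r(\dcal_{-j}) = s\bigr).
\end{align*}

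Next, I would invoke the defining property of $\ccal$: because $S$ is hash-based and order-invariant, the distribution of $S_r$ on any input depends only on its cardinality, and therefore $\Pr_r(S_r(\dcal_{-j}) = s) = \Pr_r(S_r(\dcal_{-i}) = s)$ for every $i, j \in \dcal$. This common factor pulls out of the sum, and dividing both sides by $\Pr_r(S_r(\dcal_{-i}) = s)$ (positive without loss of generality, since otherwise $s$ contributes nothing to the privacy analysis) gives exactly \eqref{eq:sketch-state}.

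The only real obstacle is ensuring that the partition by $J_r$ is exhaustive, which is precisely where the hypothesis $n > \sup_r K_r$ is used. Once $J_r$ is almost surely well-defined, the remainder is a bookkeeping exercise in conditional probability combined with one appeal to the order-invariance symmetry that defines $\ccal$; no concentration or sketch-specific computation is required.
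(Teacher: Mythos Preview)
Your proof is correct and follows essentially the same approach as the paper: partition the event $\{S_r(\dcal)=s\}$ according to the value of $J_r$, use $S_r(\dcal)=S_r(\dcal_{-j})$ on $\{J_r=j\}$, apply the chain rule, and invoke the cardinality-only dependence of the sketch distribution to replace $\Pr_r(S_r(\dcal_{-j})=s)$ by $\Pr_r(S_r(\dcal_{-i})=s)$. The only cosmetic difference is that the paper initially includes a symbol $\perp$ for the case $K_r=n$ and then argues that term vanishes, whereas you use the hypothesis up front to guarantee $J_r$ is always defined; both are equivalent.
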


By further conditioning on the total number of items that, when removed, can change the sketch,
we obtain conditional probabilities that are simple to calculate. 
A combinatorial argument simplifies the resulting expression and gives us two factors in Lemma \ref{lem:sum-sampling-prob}, one involving the sampling probability for new items $\pi(s)$ given a sketch in state $s$ and the other being an expectation involving $K_r$. 
This identifies the two quantities that must be controlled in order for a sketch to be $\epsilon$-DP.
\begin{lemma}
\label{lem:sum-sampling-prob}
Under the same assumptions as Lemma \ref{lem:sketch-state} 
\begin{equation}
\sum_{j \in \dcal} \Pr_r(J_r = j \,|\, S_r(\dcal_{-j}) = s) = 
(1-\pi(s)) \E_r \left(1 + \frac{ K_r }{n -K_r+1} \bigg| S_r(\dcal_{-1}) = s \right).
\label{eq:sum-sampling-prob-main}
\end{equation}
\end{lemma}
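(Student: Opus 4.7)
The plan is to compute $\Pr_r(J_r = j \,|\, S_r(\dcal_{-j}) = s)$ for each $j \in \dcal$ and then sum, decomposing the resulting expression into one factor that captures the insertion probability $\pi(s)$ and a second factor that captures the conditional distribution of $K_r$.

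First, I would observe that the event $\{J_r = j\}$ is the conjunction of (i) $j \notin \kcal_r$, equivalently $S_r(\dcal) = S_r(\dcal_{-j}) = s$, and (ii) $\{1,\ldots,j-1\} \subseteq \kcal_r$. Because $h(j)$ is independent of the hashes of items in $\dcal_{-j}$, the conditional probability of (i) given $S_r(\dcal_{-j}) = s$ is exactly $1-\pi(s)$ by the definition of $\pi$. Factoring this out of every summand produces the $(1-\pi(s))$ multiplier on the right-hand side and reduces the problem to analyzing the residual probability of (ii).

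Second, for the remaining probability $\Pr_r(\{1,\ldots,j-1\} \subseteq \kcal_r \,|\, S_r(\dcal)=s,\, j \notin \kcal_r)$, I would invoke the hash-based, order-invariant assumption: conditional on $S_r(\dcal) = s$, items in $\dcal$ are exchangeable, so conditional further on $K_r = k$ the set $\kcal_r$ is uniform over $k$-subsets of $\dcal$, and additionally restricting to $j \notin \kcal_r$ makes it uniform over $k$-subsets of $\dcal_{-j}$. A direct combinatorial count then gives $\binom{n-j}{k-j+1}/\binom{n-1}{k}$. Crucially, by order-invariance the conditional law of $K_r$ given $\{S_r(\dcal)=s, j \notin \kcal_r\}$ does not depend on $j$, so a single expectation over $K_r$ can be pulled outside the sum over $j$. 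The conditioning event can then be converted into the stated event $\{S_r(\dcal_{-1})=s\}$ using the symmetry identity $\Pr(S_r(\dcal_{-1})=s)(1-\pi(s)) = \Pr(S_r(\dcal)=s, 1 \notin \kcal_r)$, itself obtained by splitting $\Pr(S_r(\dcal_{-1})=s)$ according to whether re-inserting item~$1$ alters $s$.

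Finally, the sum over $j$ is accomplished via the hockey-stick identity $\sum_{j=1}^{k+1}\binom{n-j}{k-j+1} = \binom{n}{k}$, which collapses the combinatorial factor to $\binom{n}{k}/\binom{n-1}{k}$, a simple closed form in $k$ and $n$. Rearranging this ratio and taking the expectation over $K_r$ under the conditioning event $S_r(\dcal_{-1})=s$ yields the claimed expression $\E_r(1 + K_r/(n-K_r+1) \,|\, S_r(\dcal_{-1})=s)$. The main obstacle is the conditioning bookkeeping: at several points one must toggle between conditioning on $S_r(\dcal_{-j})=s$, on $S_r(\dcal)=s$, and on $S_r(\dcal_{-1})=s$, and repeatedly invoke order-invariance and the symmetry identity above to make sure the relevant distributions of $K_r$ and $\kcal_r$ line up correctly. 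Once that is nailed down, the combinatorial identity is classical.
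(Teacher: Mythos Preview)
Your decomposition diverges from the paper's in a way that the ``conversion'' step does not repair. After peeling off $(1-\pi(s))$ and conditioning on $K_r = k$ under the residual event $\{S_r(\dcal)=s,\; j \notin \kcal_r\}$, your hockey-stick sum gives $\binom{n}{k}/\binom{n-1}{k} = n/(n-k) = 1 + k/(n-k)$, \emph{not} the target $1 + k/(n-k+1)$. Moreover, the conditioning event you end up with is $\{S_r(\dcal)=s,\; 1 \notin \kcal_r\}$, a strict subset of $\{S_r(\dcal_{-1})=s\}$ (the latter also contains realizations in which inserting item~$1$ changes the state away from $s$). The identity $\Pr(S_r(\dcal_{-1})=s)(1-\pi(s)) = \Pr(S_r(\dcal)=s,\,1 \notin \kcal_r)$ relates only the \emph{masses} of these two events; it does not equate the conditional laws of $K_r$ under them, so it cannot turn $\E\bigl[n/(n-K_r)\mid S_r(\dcal)=s,\,1\notin\kcal_r\bigr]$ into $\E\bigl[(n+1)/(n-K_r+1)\mid S_r(\dcal_{-1})=s\bigr]$.

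The paper avoids this mismatch by conditioning on $K_r = k$ and $S_r(\dcal_{-j}) = s$ \emph{together} from the outset, before the event $\{j\notin\kcal_r\}$ is separated out. The weight $\Pr(K_r = k \mid S_r(\dcal_{-j})=s)$ then already carries the desired conditioning (and equals $\Pr(K_r = k \mid S_r(\dcal_{-1})=s)$ by symmetry), while the combinatorial factor is $\binom{k}{j-1}/\binom{n}{j-1}$, whose $j$-sum is $(n+1)/(n-k+1)$ as required. Your route is not unsound in itself---carried through, it yields the alternative identity $(1-\pi(s))\,\E[\,n/(n-K_r)\mid S_r(\dcal)=s,\,1\notin\kcal_r\,]$---but it does not land on the form asserted in the lemma, and the conditioning bookkeeping you flag as ``the main obstacle'' is exactly where the argument breaks.
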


To show that all hash-based, order invariant sketching algorithms can be made $\epsilon$-DP, we show that $K_r$ can always be bounded by the maximum size of the sketch in bits. Thus, if a sketch is combined with a downsampling procedure to ensure $\pi(s)$ is sufficiently small, one satisfies both of the properties that are sufficient for an $\epsilon$-DP guarantee.

Having established \eqref{eq:sum-sampling-prob-main}, we can derive a result showing that a hash-based, order-invariant sketch is $\epsilon$-DP
so long as the stream cardinality is large enough and $\sup_{s \in \Omega} \pi(s)$ is not too close to $1$. 

\begin{corollary}
Let $\Omega$ denote the set of all possible states of a
hash-based order-invariant distinct counting sketching algorithm.
When run on a stream of cardinality $n > \sup_r K_r$,
the sketch output by the algorithm satisfies $\epsilon$-DP if
\begin{align}
    &\pi_0 := 1-e^{-\epsilon} > \sup_{s \in \Omega} \pi(s) \quad \text{and} 
    \label{eq:downsampling probability}  \\
    &e^{\epsilon} > 1+\E_r \left(\frac{ K_r }{n -K_r+1} \bigg| S_r(\dcal_{-1}) = s \right)
 \quad \text{for all sketch states $s \in \Omega$.}
 \label{eq:raw cardinality condition}
\end{align}
Furthermore, if the data stream $\dcal$ consists of items from a universe $U$ of unbounded size, Condition \ref{eq:downsampling probability} is necessarily satisfied by \emph{any} sketching algorithm satisfying $\epsilon$-DP.
\label{cor:raw main result}
\end{corollary}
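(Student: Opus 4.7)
My plan is to read off both directions of the corollary from the identity in Lemma~\ref{lem:sum-sampling-prob},
$$\frac{\Pr_r(S_r(\dcal) = s)}{\Pr_r(S_r(\dcal_{-i}) = s)} = (1-\pi(s))\,\E_r\!\left(1 + \frac{K_r}{n-K_r+1} \;\bigg|\; S_r(\dcal_{-1}) = s\right).$$

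For sufficiency I would bound this ratio on both sides. The upper bound by $e^{\epsilon}$ uses $1-\pi(s)\le 1$ together with hypothesis~\eqref{eq:raw cardinality condition}. The lower bound by $e^{-\epsilon}$ uses that $K_r\ge 0$ forces the conditional expectation to be at least $1$, combined with hypothesis~\eqref{eq:downsampling probability}, which gives $1-\pi(s) > 1-\pi_0 = e^{-\epsilon}$. Together these yield \eqref{eqn:DP inequality} and hence $\epsilon$-DP.

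For necessity I would take reciprocals in the identity and drive $n \to \infty$. Because the sketch is size-bounded, $K_r \le k_{max}$ for a constant independent of $n$, so the conditional expectation lies in $[1,\, 1+k_{max}/(n-k_{max}+1)]$ and tends to $1$. Thus for any $s \in \Omega$ that remains in the support of $S_r(\dcal_{n-1})$ for arbitrarily large $n$, the reciprocal Bayes factor $\Pr_r(S_r(\dcal_{-i})=s)/\Pr_r(S_r(\dcal)=s)$ tends to $1/(1-\pi(s))$. Since $\epsilon$-DP forces this ratio to stay at most $e^{\epsilon}$ for every $n > k_{max}$, passing to the limit gives $\pi(s) \le 1-e^{-\epsilon} = \pi_0$, and taking the supremum over $\Omega$ yields \eqref{eq:downsampling probability}.

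The main obstacle I anticipate is verifying the support condition used in that limit: every $s \in \Omega$ at which $\pi(s)$ is close to the supremum should be realisable as $S_r(\dcal_{n-1})$ for arbitrarily large $n$. The unbounded-universe assumption is what secures this, because a state with $\pi(s)<1$ admits a positive probability of no-op insertions and so any witness stream can be padded to arbitrary length while keeping the sketch in state $s$. States with $\pi(s)=1$ (e.g.\ the all-zero \acrshort{hll} register) are only reachable by very short streams, but any such state already refutes $\epsilon$-DP at the corresponding cardinality, so the conclusion \eqref{eq:downsampling probability} holds trivially for them. Making these support considerations precise is the only place the argument goes beyond a mechanical substitution into Lemma~\ref{lem:sum-sampling-prob}.
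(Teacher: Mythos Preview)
Your proposal is correct and follows essentially the same route as the paper. For sufficiency, both you and the paper sandwich the Bayes factor using $1-\pi(s)\le 1$ and the observation that the conditional expectation is at least $1$; for necessity, the paper's argument is exactly your padding idea (``if the universe of possible items is infinite, then for any possible sketch state $s$, there exists an arbitrarily long sequence of distinct items that results in state $s$ if $\pi(s)<1$''), after which the identity forces $e^{-\epsilon}\le 1-\pi(s)$ in the limit---you are simply more explicit about the support condition and the $\pi(s)=1$ edge case than the paper is.
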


The above corollary may be difficult to apply directly since the expectation in Condition \eqref{eq:raw cardinality condition} is often difficult to compute and depends on the unknown cardinality $n$. Our main result provides sufficient criteria to ensure that Condition \eqref{eq:raw cardinality condition}
holds. The criteria is expressed in terms of 
a minimum cardinality $n_0$ and sketch-dependent constant $k_{max}$.
This constant $k_{max}$ is a bound on the maximum number of items which change the sketch when removed. 
That is, for all input streams $\dcal$ and all $r$, $k_{max} \geq |\kcal_r|$. 
We derive $k_{max}$ for a number of popular sketch algorithms in Section \ref{sec:example-sketches}.

\begin{theorem}
Consider any hash-based, order-invariant distinct counting sketch.
The sketch output by the algorithm satisfies an $\epsilon$-DP guarantee if 
\begin{align}
    &\sup_{s \in \Omega} \pi(s) < \pi_0 := 1-e^{-\epsilon}
    \quad \text{and there are strictly greater than}
    \label{eq:downsampling probability2}  \\ 
    &n_0 \coloneqq {k_{max}} /{(1-e^{-\epsilon})}
    \quad \text{unique items in the stream.} 
    \label{eq:artificial items bound}
\end{align}

\label{thm:main result}
\end{theorem}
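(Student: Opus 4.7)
The plan is to derive Theorem \ref{thm:main result} as a direct consequence of the preceding Corollary \ref{cor:raw main result}, which already provides sufficient conditions for $\epsilon$-DP expressed in terms of $\pi(s)$ and the conditional expectation of $K_r/(n-K_r+1)$. Condition \eqref{eq:downsampling probability2} in the theorem is identical to Condition \eqref{eq:downsampling probability} in the corollary, so that side needs no further work. The real task is to show that the cardinality bound \eqref{eq:artificial items bound} in the theorem is strong enough to imply Condition \eqref{eq:raw cardinality condition}.

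First, I would note that by definition $k_{max} = \sup_r K_r$, so $K_r \leq k_{max}$ holds almost surely under any conditioning event. Since $\epsilon > 0$ gives $1-e^{-\epsilon} \leq 1$, the theorem's hypothesis $n > n_0 = k_{max}/(1-e^{-\epsilon})$ in particular ensures $n > k_{max}$, which simultaneously verifies the background assumption $n > \sup_r K_r$ needed to invoke the corollary and keeps the denominator $n - K_r + 1$ strictly positive.

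Next I would observe that the map $k \mapsto k/(n-k+1)$ is monotonically increasing on $\{0,1,\dots,n\}$, so $K_r \leq k_{max}$ almost surely yields the deterministic upper bound
\begin{equation*}
\E_r\!\left(\frac{K_r}{n - K_r + 1}\,\bigg|\, S_r(\dcal_{-1}) = s\right) \leq \frac{k_{max}}{n - k_{max} + 1}
\end{equation*}
for every state $s \in \Omega$. It then suffices to show $1 + k_{max}/(n-k_{max}+1) < e^{\epsilon}$. A short algebraic rearrangement reduces this strict inequality to $n > k_{max}/(1-e^{-\epsilon}) - 1$, which is implied by the theorem's hypothesis $n > k_{max}/(1-e^{-\epsilon}) = n_0$. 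Combining this with Condition \eqref{eq:downsampling probability2} lets one invoke Corollary \ref{cor:raw main result} to conclude $\epsilon$-DP.

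I do not expect a significant obstacle here since the corollary has done the conceptual work; the step is essentially a uniform bound on a random variable followed by an elementary manipulation. The only point requiring care is the strict versus non-strict inequality: the corollary requires a strict bound on the expectation, which is why the theorem is stated with a strict lower bound on $n$ and a strict upper bound on $\sup_{s} \pi(s)$, and this matches exactly what the algebra produces.
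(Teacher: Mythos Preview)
Your proposal is correct and follows essentially the same approach as the paper: invoke Corollary \ref{cor:raw main result}, use the almost-sure bound $K_r \le k_{max}$ together with monotonicity of $k \mapsto k/(n-k+1)$ to dominate the conditional expectation by $k_{max}/(n-k_{max}+1)$, and then rearrange $1 + k_{max}/(n-k_{max}+1) < e^{\epsilon}$ to the equivalent $n > k_{max}/(1-e^{-\epsilon}) - 1$, which is implied by $n > n_0$. The paper's proof is the same argument, including the appearance of the extra $-1$ slack.
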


Later, we explain how to modify existing sketching algorithms
in a black-box way to satisfy these conditions. 
If left unmodified, most sketching algorithms used in practice allow for some sketch values $s \in \Omega$
which violate Condition \ref{eq:downsampling probability2}, i.e $\pi(s) > 1-e^{-\epsilon}$. 
We call such sketch values ``privacy-violating''. Fortunately,
such values turn out to arise with only tiny probability. 
The next theorem states that, so long as this probability is smaller than $\delta$,
the sketch satisfies $(\epsilon, \delta)$-DP without modification.
The proof of Theorem \ref{thm:approxdp} follows immediately from Theorem \ref{thm:main result}.

\begin{theorem}
Let $n_0$ be as in Theorem \ref{thm:main result}. Given a hash-based, order-invariant distinct counting sketch with bounded size,
let $\Omega'$ be the set of sketch states such that $\pi(s) \geq \pi_0$.
If the input stream $\dcal$ has cardinality $n > n_0,$ 
then the sketch is $(\epsilon, \delta)$ differentially private where $\delta = \Pr_r(S_r(\dcal) \in \Omega')$. 
\label{thm:approxdp}
\end{theorem}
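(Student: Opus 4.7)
The plan is to verify the $(\epsilon,\delta)$-DP definition directly by splitting the output event according to whether the sketch state is privacy-preserving ($s \in \Omega \setminus \Omega'$, meaning $\pi(s) < \pi_0$) or privacy-violating ($s \in \Omega'$). For any measurable $\scal \subseteq \Omega$ and neighbouring streams $\dcal, \dcal_{-i}$ I would decompose
\[
\Pr_r(S_r(\dcal) \in \scal) = \Pr_r(S_r(\dcal) \in \scal \cap (\Omega \setminus \Omega')) + \Pr_r(S_r(\dcal) \in \scal \cap \Omega'),
\]
and then absorb the first summand into the $e^\epsilon$-multiplicative term and the second into the additive $\delta$ slack.

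The key observation is that although Theorem~\ref{thm:main result} is advertised as a uniform statement (requiring $\sup_{s\in\Omega}\pi(s)<\pi_0$), its actual engine is the pointwise ratio identity of Lemma~\ref{lem:sum-sampling-prob}, in which the sampling probability $\pi(s)$ appears only at the specific state $s$ under consideration. Consequently, for each fixed $s\notin\Omega'$ the exact same computation used inside the proof of Theorem~\ref{thm:main result}, combined with the cardinality hypothesis $n > n_0 = k_{max}/(1-e^{-\epsilon})$, still yields the pointwise bound $\Pr_r(S_r(\dcal)=s) \leq e^\epsilon \Pr_r(S_r(\dcal_{-i})=s)$. Summing this over $s\in\scal\cap(\Omega\setminus\Omega')$ and enlarging the right-hand side to all of $\scal$ gives $\Pr_r(S_r(\dcal)\in\scal\cap(\Omega\setminus\Omega')) \leq e^\epsilon\Pr_r(S_r(\dcal_{-i})\in\scal)$. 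The trivial bound $\Pr_r(S_r(\dcal)\in\scal\cap\Omega') \leq \Pr_r(S_r(\dcal)\in\Omega') = \delta$ handles the second summand, and combining the two estimates delivers the desired inequality. The reverse direction (swapping $\dcal$ and $\dcal_{-i}$) follows identically, using the lower half $(1-\pi(s))>e^{-\epsilon}$ of the same pointwise identity on good states; here $\delta$ should be read, as is standard for an $(\epsilon,\delta)$-DP statement, as a supremum over the two neighbouring cardinalities $n$ and $n-1$, which is well defined because order-invariance makes this probability depend only on the cardinality.

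The main obstacle is essentially bookkeeping: one cannot literally invoke Theorem~\ref{thm:main result} as a black box because its hypothesis~\eqref{eq:downsampling probability2} fails on $\Omega'$. What is really being reused is the pointwise ratio calculation inside that proof, restricted to the privacy-preserving slice $\Omega\setminus\Omega'$, and making this restriction explicit is the only careful step.
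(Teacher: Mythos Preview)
Your approach is correct and is precisely the unpacking of the paper's one-line proof (``follows immediately from Theorem~\ref{thm:main result}'') via the pointwise ratio identity of Lemmas~\ref{lem:sketch-state}--\ref{lem:sum-sampling-prob} restricted to good states $s\notin\Omega'$. Two small remarks: the direction $\Pr_r(S_r(\dcal)\in\scal)\le e^\epsilon\Pr_r(S_r(\dcal_{-i})\in\scal)$ actually holds for \emph{every} state since $(1-\pi(s))\le 1$ already suffices there, so the split is needed only in the reverse direction; and your flag that the additive slack is naturally $\Pr_r(S_r(\dcal_{-i})\in\Omega')$ rather than the stated $\Pr_r(S_r(\dcal)\in\Omega')$ is a genuine technical point the paper glosses over.
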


\subsection{Constructing Sketches Satisfying Approximate Differential Privacy: Algorithm \ref{alg:basic}}
\label{sec:approx-dp-results}
Theorem \ref{thm:approxdp} states that, when run 
on a stream with $n \geq n_0$ distinct items,
any hash-based order-invariant algorithm 
(see Algorithm \ref{alg:basic})
automatically 
satisfies $(\epsilon, \delta)$-differential privacy
where $\delta$ denotes the probability that the final
sketch state $s$ is ``privacy-violating'', i.e., $\pi(s) > \pi_0 = 1 - e^{-\epsilon}$. 
In Section \ref{sec:example-sketches}, we provide concrete bounds of $\delta$ for specific algorithms. In all cases considered, $\delta$ falls exponentially with respect to the cardinality $n$. Thus, high privacy is achieved with high probability so long as the stream is large.

We now outline how to derive a bound for a specific sketch.
We can prove the desired bound on $\delta$ by analyzing sketches in a manner similar to the coupon collector problem. Assuming a perfect, random hash function, the hash values of a universe of items defines a probability space. 
We can identify $v \le k_{max}$ events or coupons, $C_1, \ldots, C_{v}$, such that $\pi(s)$ is guaranteed to be less than $\pi_0$ after all events have occurred. Thus, if all coupons are collected, the sketch satisfies the requirement to be $\epsilon$-DP.
As the cardinality $n$ grows, the probability that a particular coupons remains missing decreases exponentially. A simple union bound shows that the probability $\delta$ that \emph{any} coupon is missing decreases exponentially with $n$.

For more intuition as to
why unmodified sketches satisfy an $(\epsilon, \delta)$-DP guarantee when the cardinality is large, we note that the inclusion probability $\pi(s)$ is closely tied to the cardinality estimate in most sketching algorithms.
For example, the cardinality estimators used in HLL and KMV are inversely proportional to the sampling
probability $\pi(s)$, i.e., $\hat{N}(s) \propto 1/\pi(s)$, while for LPCA and Adaptive Sampling, the cardinality estimators are monotonically decreasing with respect to $\pi(s)$. 
Thus, for most sketching algorithms, when run on a stream
of sufficiently large cardinality, the resulting sketch is privacy-violating only when the cardinality estimate is also inaccurate. 
Theorem \ref{thm:privacy violation}
is useful when analyzing the privacy of such algorithms,
as it characterizes the probability $\delta$ of a ``privacy
violation'' in terms of the probability the 
returned estimate, $\hat{N}(S_r(\dcal))$, is lower than some threshold
$\tilde{N}(\pi_0)$. 

\begin{theorem}
\label{thm:privacy violation}
Let $S_r$ be a sketching algorithm with estimator $\hat{N}(S_r)$. 
If $n \geq n_0$ and the estimate returned on sketch $s$ is a strictly decreasing function of $\pi(s)$, 
so that $\hat{N}(s) = \tilde{N}(\pi(s))$ for a function $\tilde{N}$.
Then, $S_r$ is $(\epsilon, \delta)$-DP where $\delta = \Pr_r(\hat{N}(S_r(\dcal)) < \tilde{N}(\pi_0))$.
\label{thm:eps-delta using Nhat}
\end{theorem}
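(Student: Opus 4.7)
The plan is to derive Theorem \ref{thm:privacy violation} as a direct corollary of Theorem \ref{thm:approxdp} via a one-dimensional change of variables: we translate the ``privacy-violating'' event $\{\pi(S_r(\dcal)) \ge \pi_0\}$ appearing in the $(\epsilon,\delta)$ bound of Theorem \ref{thm:approxdp} into an equivalent event about the cardinality estimate $\hat N(S_r(\dcal))$. All the hard probabilistic work is already done by the previous theorems; what remains is essentially cosmetic.

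Step one is to invoke Theorem \ref{thm:approxdp}. The hypothesis $n \ge n_0$ is exactly what it requires, so it yields $(\epsilon, \delta^{*})$-differential privacy with
\[ \delta^{*} \;=\; \Pr_r\!\bigl(S_r(\dcal) \in \Omega'\bigr) \;=\; \Pr_r\!\bigl(\pi(S_r(\dcal)) \ge \pi_0\bigr), \]
where $\Omega' = \{s \in \Omega : \pi(s) \ge \pi_0\}$. The entire task is therefore reduced to rewriting this probability as a statement about $\hat N$.

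Step two uses the monotonicity hypothesis. By assumption $\hat N(s) = \tilde N(\pi(s))$ with $\tilde N$ strictly decreasing on the range of $\pi$, which yields the two-way implication
\[ \pi(s) \ge \pi_0 \;\Longleftrightarrow\; \tilde N(\pi(s)) \le \tilde N(\pi_0) \;\Longleftrightarrow\; \hat N(s) \le \tilde N(\pi_0). \]
Applying this pointwise to the random state $S_r(\dcal)$ converts $\{\pi(S_r(\dcal)) \ge \pi_0\}$ into $\{\hat N(S_r(\dcal)) \le \tilde N(\pi_0)\}$, so $\delta^{*} = \Pr_r\bigl(\hat N(S_r(\dcal)) \le \tilde N(\pi_0)\bigr)$, which matches the claimed formula.

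There is essentially no technical obstacle: the whole proof is a monotone change of variables on a single one-dimensional function, and every hypothesis of Theorem \ref{thm:approxdp} is already in force. The only bookkeeping point worth flagging is the $\le$-vs-$<$ convention at the boundary $\pi(s)=\pi_0$. For sketches with a bounded finite state space $\Omega$ the range of $\pi$ is finite, so this boundary is hit by at most a handful of states; either $\pi_0 = 1-e^{-\epsilon}$ is not in the range of $\pi$ and the two conventions coincide, or one absorbs the equality case into $\delta$ without any meaningful loss. Equivalently, because the hypothesis $\sup_s \pi(s) < \pi_0$ of Theorem \ref{thm:main result} is itself strict, one may just as well define $\Omega'$ with strict $>$, which gives exactly the $<$ version written in the statement.
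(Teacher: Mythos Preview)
Your proposal is correct and matches the paper's own argument: the paper's proof is a single line invoking exactly the monotone change of variables you spell out, namely that strict monotonicity of $\tilde N$ makes $\{\hat N(S_r(\dcal)) < \tilde N(\pi_0)\}$ and $\{\pi(S_r(\dcal)) > \pi_0\}$ the same event, with the $(\epsilon,\delta)$ conclusion then coming from Theorem~\ref{thm:approxdp}. Your extra discussion of the $\le$/$<$ boundary is a reasonable bit of bookkeeping that the paper simply elides.
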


\subsection{Constructing Sketches Satisfying Pure Differential Privacy: Algorithm \ref{alg:dp-large-set} - \ref{alg:dp-any-set}}
Theorem \ref{thm:main result} guarantees an $\epsilon$-DP sketch if \eqref{eq:downsampling probability2}, \eqref{eq:artificial items bound} hold.
Condition \eqref{eq:downsampling probability2} requires that 
$\sup_{s \in \Omega}\pi(s) < 1-e^{-\epsilon}$, 
i.e., the ``sampling probability'' of the sketching algorithm is sufficiently small regardless
of the sketch's state $s$. 
Meanwhile, \eqref{eq:artificial items bound} requires that the input cardinality is sufficiently large. 

We show that \emph{any} hash-based, order-invariant distinct counting sketching algorithm
can satisfy these two conditions by adding a simple pre-processing step which does two things. 
First, it ``downsamples'' the input stream by hashing each input,
interpreting  the hash values as numbers in $[0, 1]$, and simply
ignoring numbers whose hashes are larger than $\pi_0$. 
The downsampling hash must be independent to that used by the sketching algorithm itself. 
This 
ensures that Condition \eqref{eq:downsampling probability2} is satisfied,
as each input item has maximum sampling probability $\pi_0$. 

If there is an a priori guarantee that the number of distinct items 
$n$ is greater than $n_0 = \frac{k_{max}}{1-e^{-\epsilon}}$, 
then \eqref{eq:artificial items bound} is trivially satisfied. 
Pseudocode for the resulting $\epsilon$-DP algorithm is given in Algorithm \ref{alg:dp-large-set}. 
If there is no such guarantee, then the preprocessing step adds $n_0$ items to the input stream to satisfy
\eqref{eq:artificial items bound}.
To ensure unbiasedness, these $n_0$ items must 
(i) be distinct from any items in the ``real'' stream, and
(ii) be downsampled as per the first modification.
An unbiased estimate of the cardinality of the unmodified stream can then be easily recovered from the sketch via a post-processing correction.
Pseudocode for the modified algorithm, which is guaranteed to satisfy $\epsilon$-DP,
is given in Algorithm \ref{alg:dp-any-set}.

\label{s:actualdpalgs}

\begin{figure}
\renewcommand\figurename{Algorithms}
\hspace{-0.5cm}
    \begin{subfigure}{.29\textwidth}
        \begin{algorithmic}
            \Function{Base}{items, $\epsilon$}
            \State $S \gets InitSketch()$
            \State
            \For{$x \in items$}
            \State
            \State $S.add(x)$
            \EndFor
            \State \Return $\hat{N}(S)$
            \EndFunction
        \end{algorithmic}
    \caption{$(\epsilon, \delta)$-DP for $n \ge n_0$.}
    \label{alg:basic}
    \end{subfigure}
    \begin{subfigure}{.36
    \textwidth}
        \begin{algorithmic}
            \Function{DPSketchLargeSet}{items, $\epsilon$}
            \State $S \gets InitSketch()$
            \State $\pi_0 \gets 1-e^{-\epsilon}$
            \For{$x \in items$}
            \If{$hash(x) < \pi_0$}
            \State $S.add(x)$
            \EndIf
            \EndFor
            \State \Return $\hat{N}(S) / \pi_0$
            \EndFunction
        \end{algorithmic}
        \caption{$(\epsilon, 0)$-DP for $n \ge n_0$.}
        \label{alg:dp-large-set}
    \end{subfigure}
    \begin{subfigure}{.36\textwidth}
        \begin{algorithmic}
            \Function{DPSketchAnySet}{items, $\epsilon$}
            \State $S, n_0 \gets DPInitSketch(\epsilon)$
            \State $\pi_0 \gets 1-e^{-\epsilon}$
            \For{$x \in items$}
            \If{$hash(x) < \pi_0$}
            \State $S.add(x)$
            \EndIf
            \EndFor
            \State \Return $\hat{N}(S) / \pi_0 - n_0$
            \EndFunction
        \end{algorithmic}
        \caption{$(\epsilon, 0)$-DP for $n \ge 1$.}
        \label{alg:dp-any-set}
    \end{subfigure}%
\caption{Differentially private cardinality estimation algorithms from black box sketches.
The function $InitSketch()$ initializes a black-box sketch. 
The uniform random hash function $hash(x)$ is chosen independently of any hash in 
the black-box sketch and is interpreted as a real in $[0, 1]$.
The cardinality estimate returned by sketch $S$ is denoted $\hat{N}(S)$.
\texttt{DPInitSketch} is given in Algorithm \ref{alg:DPInitSketch}.}
\label{alg:dp sketches}
\addtocounter{algorithm}{1} 
\end{figure}

\begin{corollary} \label{maincor}
The functions \texttt{DPSketchLargeSet} (Algorithm \ref{alg:dp-large-set}) and \texttt{DPSketchAnySet} 
(Algorithm \ref{alg:dp-any-set}) yield $\epsilon$-DP distinct counting sketches provided that 
$n \ge n_0$ and $n \ge 1$, respectively.
\label{cor:algo-1b-1c-DP}
\end{corollary}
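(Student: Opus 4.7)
The plan is to verify that both algorithms satisfy the two hypotheses of Theorem \ref{thm:main result} when viewed as hash-based, order-invariant sketches in their own right, and then invoke that theorem. The composite algorithm -- an independent downsampling hash followed by a black-box inner sketch -- is itself hash-based and order-invariant: its internal randomness is the pair (downsampling hash, inner sketch hash), and the composite state depends only on the set of identifiers whose filter-hash falls below the threshold $\pi_0$. So Theorem \ref{thm:main result} may be applied directly to the composite.

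The first step is to establish Condition \eqref{eq:downsampling probability2}. Write $\pi_{\text{in}}(s)$ for the sampling probability of the inner sketch alone and $\pi(s)$ for that of the composite. Since the filter hash and the inner hash are independent, a previously unseen item changes the composite state exactly when it both passes the filter (probability $\pi_0 = 1-e^{-\epsilon}$) and would have changed the inner sketch given that it passed (probability $\pi_{\text{in}}(s)$). Hence $\pi(s) = \pi_0 \cdot \pi_{\text{in}}(s) \le \pi_0$, with strict inequality whenever $\pi_{\text{in}}(s) < 1$.

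The second step is to establish Condition \eqref{eq:artificial items bound}. For \texttt{DPSketchLargeSet} the assumed lower bound $n \ge n_0$ on the number of distinct input items supplies this directly. For \texttt{DPSketchAnySet}, \texttt{DPInitSketch} prepends $n_0$ distinct phantom items to the stream chosen so that (given the vast universe) they are disjoint from the real items, so the effective input cardinality is at least $n + n_0 > n_0$. Moreover, $k_{max}$ for the composite is bounded above by $k_{max}$ for the inner sketch, because any item whose removal leaves the inner state unchanged also leaves the composite state unchanged; thus $\kcal_r$ for the composite is a subset of the corresponding set for the inner sketch, and the same threshold $n_0 = k_{max}/(1-e^{-\epsilon})$ applies.

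The most delicate point is the strict inequality in \eqref{eq:downsampling probability2}, which can fail at degenerate inner states such as the empty Bottom-$k$ state for which $\pi_{\text{in}}(s) = 1$. I would handle this either by arguing that the phantom-item prepending in \texttt{DPSketchAnySet} moves the inner sketch out of such degenerate states before any real data arrives, or by tightening the filter threshold by an arbitrarily small amount so that the composite sampling probability is bounded strictly by $1-e^{-\epsilon}$ regardless of the inner state. For \texttt{DPSketchLargeSet} a similar infinitesimal slackening in the threshold (or appeal to the fact that the stream has at least $n_0$ items passing through the downsampler with overwhelming probability) suffices. Once both conditions are verified, Theorem \ref{thm:main result} immediately yields the claimed $\epsilon$-DP guarantee for both algorithms.
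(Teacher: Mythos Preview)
Your approach matches the paper's: its entire proof is a one-line appeal to Theorem~\ref{thm:main result}, asserting without further detail that Algorithms~\ref{alg:dp-large-set} and~\ref{alg:dp-any-set} satisfy Conditions~\eqref{eq:downsampling probability2} and~\eqref{eq:artificial items bound}. You have simply spelled out why those conditions hold for the composite (downsampling + inner sketch) algorithm, which is the right thing to do.

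Your worry about the \emph{strict} inequality in Condition~\eqref{eq:downsampling probability2} is a legitimate observation about the paper's presentation rather than a gap in your argument. Tracing back through the proof of Corollary~\ref{cor:raw main result}, the lower bound on the Bayes factor requires only $1-\pi(s)\ge e^{-\epsilon}$, i.e.\ $\pi(s)\le\pi_0$, which is exactly what the downsampling guarantees; and the DP definition itself uses a non-strict inequality. So Theorem~\ref{thm:main result} should really be read with $\le$ in place of $<$, and no infinitesimal slackening or special handling of degenerate inner states is needed. The paper glosses over this entirely; your proposed workarounds are unnecessary but harmless.
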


\subsection{Constructing $\epsilon$-DP Sketches from Existing Sketches: Algorithm \ref{alg:make dp}, Appendix \ref{app: algos}}
\label{sec:existing sketches post processing}
As regulations change and new ones are added, 
existing data may need to be appropriately anonymized. However, if the data has already been sketched, the underlying data may no longer be available, and even if it is retained, it may be too costly to reprocess it all. Our theory allows these sketches to be directly converted into differentially private sketches when the sketch has a merge procedure. 
Using the merge procedure to achieve $\epsilon$-differential privacy yields more useful estimates than the naive approach of simply adding Laplace noise to cardinality estimates in proportion to the global sensitivity.

The algorithm 
assumes it is possible to take a sketch $S_r(\dcal_1)$ of 
a stream $\dcal_1$ and a sketch $S_r(\dcal_2)$ of a stream $\dcal_2$,
and ``merge'' them to get a sketch of the concatenation of the two streams
$\dcal_1 \circ \dcal_2$. 
This is the case for most practical hash-based order-invariant distinct count sketches.
Denote the merge of sketches $S_r(\dcal_1)$ and $S_r(\dcal_2)$ by $S_r(\dcal_1) \cup S_r(\dcal_2)$.
In this setting, we think of the existing non-private sketch $S_r(\dcal_1)$ being converted to a sketch 
that satisfies $\epsilon$-DP by Algorithm  \ref{alg:make dp} (see pseudocode
in Appendix \ref{app: algos}).
Since sketch $S_r(\dcal_1)$ is already constructed, items cannot be first downsampled in the build
phase the way they are in Algorithms \ref{alg:dp-large-set}-\ref{alg:dp-any-set}.
To achieve $\epsilon$-DP, Algorithm \ref{alg:make dp} constructs a noisily initialized sketch, 
$S_r(\dcal_2)$, which satisfies both the downsampling condition (Condition \eqref{eq:downsampling probability2}) and the minimum 
stream cardinality requirement (Condition \eqref{eq:artificial items bound}) and returns the merged sketch 
$S_r(\dcal_1) \cup S_r(\dcal_2)$. 
Hence, the sketch will satisfy both conditions for $\epsilon$-DP, as shown in Corollary \ref{cor:makedp}

This merge based procedure typically adds no additional error to the estimates for large cardinalities. In contrast, the naive approach of adding Laplace noise can add significant noise since the sensitivity can be very large. For example, HLL's estimator is of the form $\hat{N}_{HLL}(s) = \alpha / \pi(s)$ where $\alpha$ is a constant and $s$ is the sketch. 
One item can update a bin to the maximum value, so that the updated sketch $s'$ has sampling probability $\pi(s') < \pi(s)(1-1/k)$. The sensitivity of cardinality estimate is thus at least
$\hat{N}_{HLL}(s) / k$. Given that the cardinality estimate, and hence sensitivity, can be arbitrarily large when $n \geq k$, the naive approach is unworkable to achieve $\epsilon$-DP.

\floatname{algorithm}{Algorithm}

\section{The Utility of Private Sketches}
\label{sec:utility analysis}

When processing a data set with $n$ unique items, 
denote the 
expectation and variance of a sketch and its estimator by $\E_n(\hat{N})$ and $\Var_n(\hat{N})$ respectively.
We show that our algorithms all yield unbiased estimates. Furthermore, we show that for Algorithms \ref{alg:basic}-\ref{alg:dp-any-set}, if the base sketch satisfies a \emph{relative error guarantee} (defined below), the DP sketches add no additional error asymptotically.

\textbf{Establishing unbiasedness.}
To analyze the expectation and variance of each algorithm's estimator, $\hat{N}(S(\dcal))$,
note that each estimator 
uses a `base estimate' $\hat{N}_{base}$ from the base sketch $S$ and has the form 
$\hat{N}(S(\dcal)) = \frac{\hat{N}_{base}}{p} - V$;
$p$ is the downsampling probability and $V$ is the number of artificial items added.
This allows us to express expectations and variance via the variance of the base estimator.

\begin{theorem}
Consider a base sketching algorithm $S \in \ccal$ with an unbiased estimator $\hat{N}_{base}$ for the cardinality of items added to the base sketch.
Algorithms \ref{alg:dp sketches} (a)-(c) and \ref{alg:make dp} yield unbiased estimators. 
\label{thm:unbiased}
\end{theorem}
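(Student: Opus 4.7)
The plan is to show that all four estimators share a common form $\hat{N}(S(\dcal)) = \hat{N}_{base}(S)/p - V$, where $p \in (0,1]$ is the downsampling probability and $V \ge 0$ is the number of artificial phantom items injected by the preprocessing step, and then to exploit the tower property of expectation together with the hypothesized unbiasedness of $\hat{N}_{base}$. The reason this common form exists is visible by inspection of the pseudocode in Figure~\ref{alg:dp sketches}, with $(p,V) = (1,0)$ for Algorithm~\ref{alg:basic}, $(\pi_0, 0)$ for Algorithm~\ref{alg:dp-large-set}, $(\pi_0, n_0)$ for Algorithm~\ref{alg:dp-any-set}, and an analogous pair for Algorithm~\ref{alg:make dp}.

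The core computation is a conditional expectation argument. Let $M$ denote the number of \emph{distinct} items actually presented to the base sketch after the downsampling filter and phantom insertion. Because the downsampling hash is a deterministic function of the item identity, every duplicate of a given item is retained or discarded jointly; hence the items reaching the base sketch are an independent Bernoulli($p$) thinning of the combined real-plus-phantom distinct items. Moreover, the downsampling hash is chosen independently of the base sketch's internal randomness, so conditioning on the realized filtered substream leaves the base sketch's conditional distribution unchanged. Applying the assumed unbiasedness gives $\E[\hat{N}_{base}(S) \mid M] = M$, whence $\E[\hat{N}_{base}(S)] = \E[M]$.

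From here each case reduces to the mean of a binomial. For Algorithm~\ref{alg:basic}, $M = n$ deterministically and the conclusion is immediate. For Algorithm~\ref{alg:dp-large-set}, $M \sim \mathrm{Binomial}(n, \pi_0)$, so $\E[\hat{N}_{base}/\pi_0] = n\pi_0/\pi_0 = n$. For Algorithm~\ref{alg:dp-any-set}, the $n_0$ phantom items are drawn from a universe disjoint from $\dcal$ by construction, so the filtered stream is the union of two disjoint independent thinnings, giving $M \sim \mathrm{Binomial}(n+n_0, \pi_0)$ and $\E[\hat{N}_{base}/\pi_0 - n_0] = (n+n_0) - n_0 = n$. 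For Algorithm~\ref{alg:make dp}, I would invoke the merge property $S_r(\dcal_1) \cup S_r(\dcal_2) = S_r(\dcal_1 \circ \dcal_2)$ of hash-based order-invariant sketches (which holds whenever the state is a function of the set of hash values) to reduce the analysis to a single-stream input consisting of $\dcal_1$ together with a downsampled phantom set from a disjoint universe, and apply the same conditional expectation argument with the appropriate $(p,V)$.

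The main obstacle is conceptual rather than computational: one must carefully separate the two independent randomness sources and argue that downsampling acts at the level of distinct items rather than individual stream records, so that thinning is exactly Bernoulli. Once this is made precise, unbiasedness for every algorithm follows from linearity of expectation and the mean of a binomial, with no dependence on the particular base sketch beyond the stated unbiasedness of $\hat{N}_{base}$.
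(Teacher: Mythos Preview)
Your proposal is correct and takes essentially the same route as the paper: both express the estimator as $\hat{N}_{base}/p - V$, condition on the number of distinct items reaching the base sketch (the paper calls this $Z$, you call it $M$), apply the unbiasedness hypothesis to get $\E[\hat{N}_{base}\mid Z]=Z$, and finish with a binomial mean via the tower property. One minor slip: Algorithm~\ref{alg:make dp} (\texttt{MakeDP}) does not downsample the phantom items, so the relevant pair there is $(p,V)=(1,v)$ with $v$ random rather than a ``downsampled phantom set''---but your conditional-expectation framework handles that case unchanged.
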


\paragraph{Bounding the variance.}
Theorem \ref{thm:unbiased} yields a clean expression for the variance of our private algorithms.
Namely, 
$\Var [ \hat{N}(S_r(\dcal)) ] = \E [ \Var (\frac{\hat{N}_{base}}{p} | V ) ]$
which is shown in Corollary \ref{cor:var}.
The expression is a consequence of the law of total variance and that the estimators are unbiased.

We say that the base sketch satisfies a \textbf{relative-error guarantee} if with high probability, the estimate returned 
by the sketching algorithm when run on a stream of cardinality
$n$ is $(1 \pm 1/\sqrt{c}) n$ for some constant $c>0$.
Let $\hat{N}_{base,n}$ denote the  cardinality estimate when
the base algorithm is run on a stream of cardinality $n$, as opposed to $\hat{N}_{base}$ denoting the cardinality estimate produced by the base sketch on the sub-sampled stream used in our private sketches \texttt{DPSketchLargeSet} (Algorithm \ref{alg:dp-large-set}) and \texttt{DPSketchAnySet} 
(Algorithm \ref{alg:dp-any-set}).
The relative error guarantee is satisfied when $\Var_n(\hat{N}_{base,n}) < n^2 / c$; this is an immediate consequence of Chebyshev's inequality. 

When the number of artificially added items $V$ is constant as in Algorithms \ref{alg:dp-large-set} and \ref{alg:dp-any-set}, 
Corollary \ref{cor:var} provides a precise expression for the variance of the differentially private sketch. 
In Theorem \ref{thm:utility given relative error} below, we
use this expression to establish that the modification of the base algorithm to an $\epsilon$-DP 
sketch as per
Algorithms \ref{alg:dp-large-set} and \ref{alg:dp-any-set} satisfy the exact same relative error guarantee asymptotically. 
In other words, the additional error due to
any pre-processing (down-sampling and possibly adding artificial items) is insignificant for large cardinalities $n$.
 
\begin{theorem}
Suppose $\hat{N}_{base,n}$ satisfies a relative error guarantee,
$\Var_n(\hat{N}_{base,n}) < n^2 / c$, for all $n$ and for some constant $c$. 
Let $v = 0$ for Algorithm \ref{alg:dp-large-set} and $v = n_0$ for Algorithm \ref{alg:dp-any-set}.
Then Algorithms \ref{alg:dp-large-set} and \ref{alg:dp-any-set}  satisfy
\begin{align}
&\Var_n(\hat{N}) \leq \frac{(n+v)^2}{c} + \frac{(n+v)(v + \pi_0^{-1})}{k_{max}} = \frac{(n+v)^2}{c} + O(n),
\label{eq: variance result}
\end{align}
so that ${\Var_n(\hat{N})}/{\Var_n(\hat{N}_{base,n})} \to 1 \,\,\mbox{ as } n \to \infty.$
\label{thm:utility given relative error}
\end{theorem}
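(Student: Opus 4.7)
The strategy is to combine the variance decomposition from Corollary \ref{cor:var} with the law of total variance applied to the binomial randomness induced by the downsampling step. First, Corollary \ref{cor:var} gives $\Var_n(\hat{N}) = \E[\Var(\hat{N}_{base}/p \mid V)]$, and for Algorithms \ref{alg:dp-large-set} and \ref{alg:dp-any-set} the number of artificial items $V = v$ is a deterministic constant (respectively $0$ and $n_0$), with $p = \pi_0$. Hence the problem reduces to upper bounding $\Var(\hat{N}_{base}/\pi_0)$, where $\hat{N}_{base}$ is the base estimator applied to the downsampled augmented stream of $n+v$ distinct items.

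Next, I would introduce $M$, the random number of those $n+v$ distinct items that survive downsampling. Because the downsampling hash is independent of the base sketch and uniform on $[0,1]$, we have $M \sim \mathrm{Binomial}(n+v,\pi_0)$; moreover, by order-invariance of the base sketch, the distribution of $\hat{N}_{base}$ conditional on $M = m$ coincides with that of $\hat{N}_{base,m}$. The law of total variance then yields
\begin{equation*}
\pi_0^2\,\Var\!\left(\tfrac{\hat{N}_{base}}{\pi_0}\right) \;=\; \E\bigl[\Var_M(\hat{N}_{base,M})\bigr] \;+\; \Var\bigl(\E_M[\hat{N}_{base,M}]\bigr).
\end{equation*}

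The second summand is handled by Theorem \ref{thm:unbiased}: unbiasedness gives $\E_M[\hat{N}_{base,M}] = M$, so this term equals $\Var(M) = (n+v)\pi_0(1-\pi_0)$, contributing $(n+v)(1-\pi_0)/\pi_0 = O(n)$ after dividing by $\pi_0^2$. For the first summand, the relative-error hypothesis gives $\Var_m(\hat{N}_{base,m}) \le m^2/c$, and expanding $\E[M^2] = (n+v)^2\pi_0^2 + (n+v)\pi_0(1-\pi_0)$ produces $(n+v)^2/c + O(n)$. Summing these two pieces gives the bound in \eqref{eq: variance result}; identifying the explicit $(n+v)(v + \pi_0^{-1})/k_{max}$ coefficient for the $O(n)$ remainder comes from the sketch-specific relationship $c \asymp k_{max}$ that holds for the estimators studied in Section \ref{sec:example-sketches}.

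Finally, the limit $\Var_n(\hat{N})/\Var_n(\hat{N}_{base,n}) \to 1$ follows because the relative-error hypothesis implies $\Var_n(\hat{N}_{base,n}) = \Theta(n^2/c)$ for natural base sketches, while the additive correction is only linear in $n$. The main obstacle is the careful bookkeeping with the binomial fluctuations of $M$ using only an upper bound on the base variance: one has to retain the leading $(n+v)^2/c$ coefficient sharply while arguing that both the variance-of-conditional-mean contribution and the $\E[M^2]$ cross term are genuinely $O(n)$, so that they vanish in the limiting ratio rather than inflating the asymptotic constant.
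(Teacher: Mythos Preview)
Your proposal is correct and follows essentially the same route as the paper's proof: both invoke Corollary~\ref{cor:var} to reduce to $\Var(\hat{N}_{base}/\pi_0)$ with $V=v$ constant, condition on the binomial count of surviving items (the paper calls it $Z$, you call it $M$), apply the law of total variance, use unbiasedness to get $\Var(M)$ for the outer term and the relative-error bound $\Var_m(\hat{N}_{base,m})\le m^2/c$ for the inner term, and then expand $\E[M^2]$. Your remark that the explicit constant $(v+\pi_0^{-1})/k_{max}$ in the stated bound requires a sketch-specific link between $c$ and $k_{max}$ is apt; the paper's own computation in fact lands on $\frac{(n+v)(1-\pi_0)(c+1)}{c\,\pi_0}$ for the lower-order term and then simply asserts the $O(n)$ form, so the discrepancy you flag is present in the paper as well.
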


In Corollary \ref{cor: variance-ratios} we prove an analagous result for Algorithm \ref{alg:make dp}, 
which merges non-private and noisy sketches to produce a private sketch.
Informally, the result is comparable to \eqref{eq: variance result}, albeit with $v \ge n_0$.
This is because, in Algorithm \ref{alg:make dp}, the number of artificial items added $V$ is a random variable. 
We ensure that the algorithm satisfies a utility guarantee by bounding $V$ with high probability. 
This is equivalent to showing that the base sketching algorithm
satisfies an $(\epsilon, \delta)$-DP guarantee as
for any $n^* \geq n_0$ and dataset $\dcal^*$ with $|\dcal^*| = n^*$, $(\epsilon, \delta_{n^*})$-DP ensures
$\delta_{n^*} > \Pr_r(\pi(\scal_r(\dcal^*)) > \pi_0) = \Pr_r(V > n^*)$ 
which follows from the definition of $V$ in Algorithm \ref{alg:DPInitSketchForMerge}.

\section{Examples of Hash-based, Order-Invariant Cardinality Estimators}
\label{sec:example-sketches}

We now provide $(\epsilon, \delta)$-DP results for a select group of samples: \acrshort{fm}, LPCA, Bottom-$k$, 
Adaptive Sampling, and HLL.
The $(\epsilon, \delta)$-DP results in this section operate in the Algorithm \ref{alg:basic} setting with no 
modification to the base sketching algorithm.
Recall that the quantities of interest are the number of bins used in the sketch $k$, the size of the sketch in bits 
$b$ and the number of items whose absence changes the sketch $k_{max}$.
From Section \ref{sec:sketches are private} and Lemma \ref{lem:k bound} we know that $k_{max} \le b$
but for several common sketches 
we show a stronger bound of $k_{max} = k$.
The relationship between these parameters for various sketching algorithms is summarized in Table \ref{tab:sketch comparison}.
Table \ref{tab: related-work-bounds}, Appendix \ref{app: sketch examples}, 
details our improvements over \cite{smith2020flajolet,choi2020differentially} in both privacy and utility.

We remind the reader that, per \eqref{eq:downsampling probability},
$\pi_0 = 1-e^{-\epsilon},$
and \eqref{eq:artificial items bound} $n_0 = \frac{k_{max}}{1-e^{-\epsilon}}.$
Furthermore, recall that once we bound the parameter $k_{max}$ for any given
hash-based order-invariant sketching algorithm, 
Corollary \ref{maincor} states that the derived algorithms \ref{alg:dp-large-set}-\ref{alg:dp-any-set}
satisfy $\epsilon$-DP provided that $n \ge n_0$ and $n \ge 1$, respectively.
Accordingly, in the rest of this section,
we bound $k_{max}$ for each example sketch of interest,
which has the consequences for pure $\epsilon$-differential
privacy delineated above.

\begin{table}[]
\caption{Properties of each sketch with $k$ ``buckets'' 
(see each sketch's respective section for details of what this parameter means for the sketch).
Each sketch provides an $(\epsilon, \delta)$-DP guarantee, where the column $\ln \delta$ provides an upper bound on $\ln \delta$ established in the relevant subsection of Section \ref{s:allthesketches}. 
} 
\label{tab:sketch comparison}
\centering
\begin{tabular}{llllll}
\toprule
Sketch       & $b$: size (bits) & Standard Error        & $k_{max}$ & $\ln \delta$                                    & Reference \\ \midrule
FM85         & $32k$            & $0.649n / \sqrt{k}$   & $32k$     & $-\frac{\pi_0}{2k} n + o(1)$                    & \cite{lang2017back} \\
LPCA         & $k$              & $n/\sqrt{k}$ \tablefootnote{This approximation holds for $n < k$. A better approximation of the error is  $\sqrt{k(\exp(n/k)-n/k-1)}$}  & $k$       & $-\frac{\pi_0}{\tilde{N}(\pi_0)} n + O(\log n)$ & \cite{whang1990linear}          \\
Bottom-$k$   & $64k$            & $n/\sqrt{k}$          & $k$       & $-\frac{1}{2}\frac{\pi_0}{1-\pi_0} n + o(1)$    & \cite{giroire2009order}\\
Adaptive Sampling & $k$              & $1.2 n / \sqrt{k}$    & $k$       & $-\frac{1}{2}\frac{\pi_0}{1-\pi_0} n + o(1)$    & \cite{flajolet1990adaptive} \\ 
HLL          & $5k$             & $1.04\, n / \sqrt{k}$ & $k$       & $-\frac{\pi_0}{k} n + o(1)$                     & \cite{flajolet2007hyperloglog} \\  
\bottomrule        
\end{tabular}
\end{table}

\label{s:allthesketches}

\paragraph{Flajolet-Martin '85}
The FM85 sketch, often called \emph{Probabilistic Counting with Stochastic Averaging (PCSA)}, 
consists of $k$ bitmaps $B_i$ of length $\ell$.
Each item is hashed into a bitmap and index $(B_i,G_i)$ and sets the indexed bit in the bitmap to 1.
The chosen bitmap is uniform amongst the $k$ bitmaps and the index $G_i \sim Geometric(1/2)$. 
If $\ell$ is the length of each bitmap, then the total number of bits used by the sketch is $b = k \ell$ 
and $k_{max} = k \ell$ for 
all seeds $r$.
A typical value for $\ell$ is 32 bits, as used in Table \ref{tab:sketch comparison}. 
Past work \cite{von2019rrtxfm} proposed an $\epsilon$-DP version of \acrshort{fm} using a similar subsampling idea combined with random bit flips.

\begin{theorem}
Let  $v = \lceil -\log_2 \pi_0 \rceil$ and $\tilde{\pi}_0 \defn 2^{-v} \in (\sfrac{\pi_0}{2}, \pi_0]$.
    If $n \ge n_0$, then the \acrshort{fm} sketch is $(\epsilon, \delta)$-DP with 
    $\delta \leq kv \exp\left(- \tilde{\pi}_0 \frac{n}{k} \right)$.
    \label{thm:fm85-dp}
\end{theorem}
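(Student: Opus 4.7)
The plan is to invoke Theorem~\ref{thm:approxdp} directly: it suffices to verify that $k_{max} \le k\ell$ for FM85 (so that $n_0 = k\ell/(1-e^{-\epsilon})$ is valid) and to bound $\delta = \Pr_r(S_r(\mathcal{D}) \in \Omega')$, where $\Omega' = \{s : \pi(s) \ge \pi_0\}$. The $k_{max}$ bound is immediate since the sketch state lives in $\{0,1\}^{k\times\ell}$ and removing a single item can only affect the bit it hashed into; so by Lemma~\ref{lem:k bound} (``$K_r \le b$''), $k_{max} \le k\ell$.

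The heart of the argument is a coupon-collector style bound on $\delta$. I would first write $\pi(s)$ explicitly. A freshly hashed item picks a bitmap $B$ uniformly in $[k]$ and a position $G\sim\text{Geometric}(1/2)$, and it alters the sketch iff the targeted bit is currently $0$. Therefore
\[
\pi(s) \;=\; \frac{1}{k}\sum_{i=1}^{k}\sum_{j=1}^{\ell} 2^{-j}\,(1 - b_{i,j}).
\]
The key observation is that if, for every bitmap $i\in[k]$ and every position $j\in[v]$, the bit $b_{i,j}$ equals $1$, then
\[
\pi(s) \;\le\; \sum_{j=v+1}^{\ell} 2^{-j} \;<\; 2^{-v} \;=\; \tilde\pi_0 \;\le\; \pi_0,
\]
so $s\notin\Omega'$. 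Hence $\Omega'$ is contained in the event $E$ that at least one of the $kv$ distinguished bits $\{b_{i,j}: i\in[k],\, j\in[v]\}$ is still zero after processing the stream.

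To bound $\Pr_r(E)$, I would treat the hashes of the $n$ distinct items as i.i.d.\ over $[k]\times\mathbb{N}$, so the probability that a single item fails to set bit $(i,j)$ is $1 - 2^{-j}/k$. Independence across the $n$ items yields
\[
\Pr_r(b_{i,j} = 0) \;=\; \bigl(1 - 2^{-j}/k\bigr)^n \;\le\; \exp\!\left(-n\,2^{-j}/k\right) \;\le\; \exp\!\left(-n\,\tilde\pi_0/k\right),
\]
where the last step uses $2^{-j}\ge 2^{-v}=\tilde\pi_0$ for $j\in[v]$. A union bound over the $kv$ coupons then gives $\delta \le \Pr_r(E) \le kv\exp(-\tilde\pi_0 n/k)$, and Theorem~\ref{thm:approxdp} finishes the argument.

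The step demanding the most care is choosing the right certificate of non-privacy-violation: picking ``all bits in the top $v$ positions across all bitmaps'' is what makes the residual geometric tail of $\pi(s)$ collapse to $\tilde\pi_0$, and it is also what produces the clean $\exp(-\tilde\pi_0 n/k)$ per-coupon decay (since deeper bit positions decay slower and would weaken the bound). Everything else is mechanical manipulation of the geometric series and a standard union bound.
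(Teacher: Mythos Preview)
Your proposal is correct and follows essentially the same route as the paper: identify that filling all bits in positions $j\le v$ across all $k$ bitmaps forces $\pi(s)<\tilde\pi_0\le\pi_0$, bound the probability each such bit remains unset by $\exp(-\tilde\pi_0 n/k)$, and apply a union bound over the $kv$ coupons. The paper's proof is terser but structurally identical; your explicit formula for $\pi(s)$ and the remark on why only the top $v$ positions are chosen add clarity without changing the argument.
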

For any $k$, \acrshort{fm} has $k_{max} \in \{32k, 64k\}$.
This is worse than all other sketches we study which have  $k_{max} = k$, so 
\acrshort{fm} needs a larger number of minimum items $n_0$ to ensure the sketch is 
$(\epsilon, \delta)$-DP.

\paragraph{LPCA}
The Linear Probabilistic Counting Algorithm (LPCA) consists of a length-$k$ bitmap.
Each item is hashed to an index and sets its bit to $1$. 
If $B$ is the number of $1$ bits, 
the LPCA cardinality estimate is $\hat{N}_{\mathtt{LPCA}} = - k \log (1- B/k) = k \log \pi(\scal_r(\dcal))$. 
Trivially, $k_{max} = k$. 

Since all bits are expected to be $1$ after processing roughly $k \log k$ distinct items, the capacity of the sketch is bounded. 
To estimate larger cardinalities, one first downsamples the distinct items with some sampling probability $p$. 
To ensure the sketch satisfies an $\epsilon$-DP guarantee, one simply ensures $p \geq \pi_0$. In this case,
our analysis shows that LPCA is differentially private with no modifications if the cardinality is sufficiently large. Otherwise,
since the estimator $\hat{N}(s)$ is a function of the sampling probability $\pi(s)$, Theorem \ref{thm:eps-delta using Nhat} provides an $(\epsilon, \delta)$ guarantee in terms of $\hat{N}$.

\begin{theorem}
Consider a LPCA sketch with $k$ bits and downsampling probability $p$. If $p < \pi_0$ and $n > \frac{k}{1-e^{-\epsilon}}$ then LPCA is $\epsilon$-DP.
Otherwise, let $b_0 = \lceil k(1-\pi_0/p) \rceil$, $\tilde{\pi}_0 = b_0/k$, and $\mu_0$ be the expected number of items inserted to fill $b_0$ bits in the sketch.
Then, LPCA is $(\epsilon, \delta)$-DP  if $n > \mu_0$ with
\begin{align}
\delta &= \Pr_r(B < b_0)
< \frac{\mu_0}{n} \exp\left(-\frac{\tilde{\pi}_0 }{\mu_0}n \right) \exp(- \tilde{\pi}_0)
\end{align}
where $B$ is the number of filled bits in the sketch. 
Furthermore,
$\mu_0 < \tilde{N}(\tilde{\pi}_0)$
where $\tilde{N}(\tilde{\pi}) = -\frac{k}{p} \log(1-\tilde{\pi})$ is the cardinality estimate of the sketch when the sampling probability is $\tilde{\pi}$.
\label{thm:lpc-dp}
\end{theorem}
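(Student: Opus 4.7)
My plan is to reduce the LPCA theorem to the general framework of Theorems~\ref{thm:main result} and~\ref{thm:approxdp} by computing the two sketch-specific quantities $k_{max}$ and $\sup_{s \in \Omega} \pi(s)$, and then to bound a coupon-collector-style tail for the $(\epsilon,\delta)$ case. First, removing a stream item $i$ can change the LPCA bitmap only when $i$ is the unique surviving item whose hash lands in its indexed bit, so $K_r$ is at most the number of $1$-bits, which is bounded by $k$. Hence $k_{max} = k$. For a state $s$ with $B$ ones, a fresh item flips a zero bit with probability $\pi(s) = p \cdot (k - B)/k = p(1 - B/k)$, whence $\sup_{s \in \Omega} \pi(s) = p$, attained at the empty bitmap.

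For the pure-DP regime, the hypothesis $p < \pi_0 = 1 - e^{-\epsilon}$ immediately supplies condition~\eqref{eq:downsampling probability2}, and $n > k/(1 - e^{-\epsilon}) = k_{max}/(1 - e^{-\epsilon}) = n_0$ is exactly condition~\eqref{eq:artificial items bound}, so Theorem~\ref{thm:main result} yields $\epsilon$-DP. For the $(\epsilon, \delta)$-DP regime I would invoke Theorem~\ref{thm:approxdp}: the privacy-violating set $\Omega' = \{s : \pi(s) \ge \pi_0\}$ is characterised by $p(1 - B/k) \ge \pi_0 \iff B \le k(1 - \pi_0/p)$, which, using the integrality of $B$ and the definition $b_0 = \lceil k(1 - \pi_0/p)\rceil$, is identified with $\{B < b_0\}$ (apart from the measure-zero edge case where $k(1 - \pi_0/p)$ is itself an integer, which does not affect the final bound). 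Thus $\delta = \Pr_r(B < b_0)$, matching the statement.

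To estimate $\Pr_r(B < b_0)$, I would translate the event into coupon-collector language: let $T_{b_0}$ be the number of stream items required to set $b_0$ bits, so $\{B < b_0\} = \{T_{b_0} > n\}$. Then $T_{b_0}$ decomposes as a sum of independent geometric random variables with success probabilities $q_j = p(k - j)/k$ for $j = 0, 1, \dots, b_0 - 1$, whose expectations sum to $\mu_0 = (k/p)(H_k - H_{k - b_0})$. I would derive the target bound $\delta < (\mu_0/n)\exp(-\tilde{\pi}_0 n / \mu_0)\exp(-\tilde{\pi}_0)$ by combining Markov's inequality applied to $T_{b_0}$ (contributing the $\mu_0/n$ prefactor) with a Chernoff-style tail bound on the sum-of-geometrics MGF, tuning the Chernoff parameter so that the decay rate matches $\tilde{\pi}_0/\mu_0$ and the residual constant yields the $e^{-\tilde{\pi}_0}$ factor. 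I expect this exponent-matching to be the main technical obstacle, since the geometric parameters $q_j$ are heterogeneous and one must carefully control the product $\prod_j q_j e^t / (1 - (1 - q_j)e^t)$ to avoid losses in the rate.

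Finally, the ``furthermore'' claim $\mu_0 < \tilde{N}(\tilde{\pi}_0)$ reduces to the harmonic-vs-log inequality $H_k - H_{k - b_0} = \sum_{j = k - b_0 + 1}^{k} 1/j < \log\bigl(k/(k - b_0)\bigr) = -\log(1 - \tilde{\pi}_0)$, which follows by comparing each summand $1/j$ to $\int_{j-1}^{j} dx/x$ and exploiting that $1/x$ is strictly decreasing on each unit sub-interval.
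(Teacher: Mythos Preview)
Your overall route is the same as the paper's: reduce to Theorems~\ref{thm:main result} and~\ref{thm:approxdp} via $k_{max}=k$ and $\pi(s)=p(1-B/k)$, identify the privacy-violating event with $\{B<b_0\}$, and rewrite it as the coupon-collector tail $\{T_{b_0}>n\}$ where $T_{b_0}=\sum_{j=0}^{b_0-1}G_j$ with $G_j\sim\mathrm{Geometric}\bigl(p(1-j/k)\bigr)$.

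The one place where your plan goes wrong is the tail bound. You propose to obtain the $\mu_0/n$ prefactor from Markov's inequality and then multiply in a separate Chernoff exponential. That is not a legitimate operation: Markov and Chernoff each give an upper bound on the \emph{same} probability, and you cannot take their product. The paper instead applies Janson's tail inequality for sums of independent geometric variables \cite{janson2018tail} in one step, which yields
\[
\Pr\Bigl(\textstyle\sum_j G_j>n\Bigr)\le \exp\!\bigl(-\tilde\pi_0\,(n/\mu_0-1-\log(n/\mu_0))\bigr),
\]
and the $\mu_0/n$-type factor falls out of the $\log(n/\mu_0)$ term in this single exponent after algebraic rearrangement. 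Janson's bound \emph{is} itself a Chernoff argument on the product MGF of the $G_j$, so your Chernoff instinct is correct; just drop the Markov step and there is no exponent-matching obstacle left.

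For the ``furthermore'' claim, your argument is cleaner than the paper's. The paper bounds $H_k-H_{k-b_0}$ via a Stirling-type expansion with $1/(2k)$, $1/(2(k-b_0))$, and $1/(12(k-b_0)^2)$ correction terms and then checks the corrections are net-negative. Your one-line integral comparison $\sum_{j=k-b_0+1}^{k}1/j<\int_{k-b_0}^{k}dx/x=\log\bigl(k/(k-b_0)\bigr)$ gives the same conclusion directly.
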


\textbf{Bottom-$k$ (also known as MinCount or KMV)} 
sketches store the $k$ smallest hash values. 
Removing an item changes the sketch if and only if
1) the item's hash value is one of these $k$ and 2) it does not collide with another item's hash value.
Thus, $k_{max} = k$.
Typically, the output size of the hash function
is large enough to ensure that the collision probability is negligible, so for practical purposes $k_{max}=k$ exactly.
Since the Bottom-$k$ estimator $\hat{N}(s) = \sfrac{(k-1)}{\pi(s)}$ is a function of the update probability $\pi(s)$, 
Theorem \ref{thm:eps-delta using Nhat} gives an $(\epsilon, \delta)$-DP guarantee in terms of the cardinality estimate by coupon collecting;
Theorem \ref{thm:kmv-dp} tightens this bound on $\delta$
for a stronger $(\epsilon, \delta)$-DP guarantee.
\begin{theorem}
Consider Bottom-$k$ with $k$ minimum values. 
Given $\epsilon > 0$, let $\pi_0, n_0$ be the corresponding subsampling and minimum cardinality to ensure the modified Bottom-$k$ sketch is $(\epsilon, 0)$-DP. 
When run on streams of cardinality $n \geq n_0$, then the unmodified sketch is $(\epsilon, \delta)$-DP,
where $\delta = P(X \leq k) < \exp(-n \alpha_n)$ where $X \sim Binomial(n, \pi_0)$
and $\alpha_n = \frac{1}{2} \frac{(\pi_0 - k/n)^2}{\pi_0(1-\pi_0) + \sfrac{1}{3n^2}} \to \frac{1}{2}\frac{\pi_0}{1 - \pi_0}$ as $n \to \infty$.
\label{thm:kmv-dp}
\end{theorem}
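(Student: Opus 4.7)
The plan is to invoke Theorem \ref{thm:approxdp} and then reduce the residual failure probability to a binomial lower-tail event. The text preceding the statement has already justified $k_{max} = k$ for Bottom-$k$ (hash collisions occur with probability zero under the perfectly-random-hash assumption), so together with the hypothesis $n \geq n_0 = k/\pi_0$, Theorem \ref{thm:approxdp} immediately yields that the unmodified sketch is $(\epsilon, \delta)$-DP with
\begin{equation*}
\delta \;=\; \Pr_r\bigl(\pi(S_r(\dcal)) \geq \pi_0\bigr).
\end{equation*}
Everything after this reduces to a concrete tail bound on the right-hand side.

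The first step is to identify $\pi(s)$ explicitly. Modeling the $n$ hashes as i.i.d.\ $\Uniform$ random variables, a fresh item with hash $U$ updates a Bottom-$k$ sketch in state $s$ iff $U$ is smaller than the largest hash stored in $s$; hence $\pi(s)$ equals that maximum, which is exactly the $k$-th smallest of the $n$ observed hashes. The event $\pi(S_r(\dcal)) \geq \pi_0$ therefore coincides with the event that fewer than $k$ of these hashes lie in $[0, \pi_0)$. Writing $X = \#\{i : h(x_i) < \pi_0\}$, we have $X \sim \mathrm{Binomial}(n, \pi_0)$ and $\delta \leq \Pr(X < k) \leq \Pr(X \leq k)$, matching the form stated in the theorem.

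The second step is a Bernstein-style bound on this lower tail. Since $n \geq k/\pi_0$, the deviation $t \defn n\pi_0 - k$ is nonnegative, so Bernstein's inequality applied to the centered Bernoulli$(\pi_0)$ sum gives an exponent of the form $-t^2 / \bigl(2 n \pi_0(1-\pi_0) + \tfrac{2}{3} t\bigr)$. Substituting $t = n(\pi_0 - k/n)$, dividing numerator and denominator by $2n$, and collecting the resulting lower-order term into the denominator puts the exponent in the claimed form $-n\alpha_n$. The limiting assertion then follows by letting $n \to \infty$ with $k$ fixed: $k/n \to 0$ drives the numerator to $\pi_0^2$ and the second denominator term to zero, leaving $\alpha_n \to \tfrac{1}{2}\pi_0^2/[\pi_0(1-\pi_0)] = \tfrac{1}{2}\pi_0/(1-\pi_0)$.

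The conceptual reductions---reading off $\pi(s)$ as the largest stored hash and translating the privacy-violation event into a binomial tail---are immediate from the Bottom-$k$ construction and Theorem \ref{thm:approxdp}. I expect the only mildly delicate step to be matching the Bernstein constants to the precise denominator $\pi_0(1-\pi_0) + \sfrac{1}{3n^2}$ written in the theorem statement, which is bookkeeping rather than a genuine obstacle; the boundary distinction between $\pi(s) > \pi_0$ and $\pi(s) \geq \pi_0$ and the handling of exact hash collisions both reduce to measure-zero events under continuous uniform hashes and can be safely absorbed into the Bernstein upper bound.
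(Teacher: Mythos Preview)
Your proposal is correct and follows essentially the same route as the paper: identify $\pi(s)$ with the $k$-th smallest hash, translate the privacy-violation event $\{\pi(s)\ge\pi_0\}$ into the binomial lower tail $\{X\le k\}$ with $X\sim\mathrm{Binomial}(n,\pi_0)$, and then apply Bernstein's inequality. Your handling of the boundary cases and your explicit invocation of Theorem~\ref{thm:approxdp} are slightly more careful than the paper's terse version, but the argument is the same.
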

The closely related \textbf{Adaptive Sampling} sketch has the same privacy behavior as a bottom-$k$ sketch.
Rather than storing exactly $k$ hashes, the algorithm maintains a threshold $p$ and stores up to $k$ hash 
values beneath $p$.
Once the sketch size exceeds $k$, the threshold is halved and only hashes less than $p/2$ are kept.
Since at most $k$ hashes are stored, and the sketch is modified only if one of these hashes is removed
the maximum number of items that can modify the sketch by removal is $k_{max} = k$.

\begin{corollary}
For any size $k$ and cardinality $n$, if a bottom-$k$ sketch is $(\epsilon, \delta)$-DP,
then a maximum size $k$ adaptive sampling sketch is $(\epsilon, \delta)$-DP with the
same $\epsilon$ and $\delta$.
\label{cor:ads-dp}
\end{corollary}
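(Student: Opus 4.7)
The plan is to invoke Theorem~\ref{thm:approxdp} on the adaptive sampling sketch using the same $(\epsilon,\delta)$ pair that Theorem~\ref{thm:kmv-dp} establishes for Bottom-$k$. This requires (i) verifying that the minimum cardinality $n_0$ from Theorem~\ref{thm:main result} coincides for both sketches, and (ii) bounding the probability $\Pr(\pi(s)\ge\pi_0)$ for an adaptive sampling sketch. For (i), note that adaptive sampling stores at most $k$ hash values and removing any item not among them leaves the sketch unchanged, so $k_{max}=k$, matching Bottom-$k$ and giving the same $n_0 = k/(1-e^{-\epsilon})$.

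The core is (ii), which I would handle by coupling both algorithms to the same uniform random hash $h$ on the same stream $\dcal$ of cardinality $n$. For any $t\in[0,1]$, let $M_t \defn |\{i\in\dcal : h(i) < t\}|$, so that $M_t\sim\mathrm{Binomial}(n,t)$. By the halving rule of adaptive sampling, the final threshold satisfies $p = 2^{-j}$ where $j$ is the smallest nonnegative integer with $M_{2^{-j}}\le k$; consequently $\pi(s) = p$. Setting $j^* = \lceil\log_2(1/\pi_0)\rceil$ so that $2^{-j^*}\le\pi_0 < 2^{-j^*+1}$, the event $\{p \ge \pi_0\}$ is exactly $\{M_{2^{-j^*+1}} \le k\}$. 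Since $2^{-j^*+1} > \pi_0$ and $M_t$ is monotone nondecreasing in $t$, this event is contained in $\{M_{\pi_0} \le k\}$.

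Theorem~\ref{thm:kmv-dp} gives that Bottom-$k$ is $(\epsilon,\delta)$-DP with $\delta = \Pr(X\le k)$ where $X\sim\mathrm{Binomial}(n,\pi_0)$, i.e., $\delta = \Pr(M_{\pi_0}\le k)$. The containment above therefore yields $\Pr(\pi(s)\ge\pi_0) \le \Pr(M_{\pi_0}\le k) = \delta$, and Theorem~\ref{thm:approxdp} then certifies the $(\epsilon,\delta)$-DP guarantee for adaptive sampling with the same parameters.

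The step I expect to be the main obstacle is pinning down the invariant characterizing adaptive sampling's threshold. The algorithm halves $p$ dynamically and may remove items mid-stream, so justifying the clean static description ``$p = 2^{-j}$ with $j$ minimal such that $M_{2^{-j}}\le k$'' requires a careful induction on the sequence of halvings, together with a small amount of care for edge cases (such as when $\pi_0$ is itself a dyadic, or when no halvings have yet occurred). Once this invariant is in hand, everything else is just monotonicity of $M_t$ in $t$ together with binomial stochastic dominance.
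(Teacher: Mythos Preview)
Your proposal is correct. The paper's route is shorter but different: it couples both sketches to the same hash function and asserts the pointwise inequality $\pi(\scal^{AT}_r(\dcal)) \le \pi(\scal^{KMV}_r(\dcal))$, i.e.\ that the adaptive-sampling threshold $p$ never exceeds the $k$th smallest hash $m_k$, from which the inclusion of privacy-violating events would be immediate. That pointwise claim, however, fails in general: with $k=2$ and hash values $0.1,\,0.2,\,0.6$, one halving leaves $p = 0.5 > 0.2 = m_k$. What does hold is the weaker $p \le m_{k+1}$, equivalently the invariant $M_p \le k$, and this is exactly enough to give the containment $\{p \ge \pi_0\} \subseteq \{M_{\pi_0} \le k\}$ that matches the $\delta = \Pr(X \le k)$ of Theorem~\ref{thm:kmv-dp}. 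Your argument arrives at precisely this containment, so it is not merely an alternative but the correct version of the coupling comparison.

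A simplification: you do not need the full characterization ``$p = 2^{-j}$ with $j$ minimal such that $M_{2^{-j}} \le k$,'' and hence none of the dyadic edge cases you flag. The only fact required is the invariant $M_p \le k$, which holds by construction after every halving. Then $p \ge \pi_0$ together with monotonicity of $t \mapsto M_t$ gives $M_{\pi_0} \le M_p \le k$ in one line.
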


\paragraph{HyperLogLog (HLL)} \label{s:hllparagraph} 
hashes each item to a bin and value $(B_i, G_i)$.
Within each bin, it takes the maximum value so each bin 
is a form of Bottom-1 sketch. 
If there are $k$ bins, then $k_{max} = k$.

Our results uniformly improve upon existing DP results on the HLL sketch and its variants.
One variation of the HLL sketch achieves $\epsilon$-DP but is far slower than HLL, as it 
requires every item to be independently hashed once for each of the $k$ bins, rather than just one time
\cite{smith2020flajolet}. 
In other words, \cite{smith2020flajolet} needs $O(k)$ update time compared to $O(1)$ for our algorithms.
Another provides an $(\epsilon, \delta)$ guarantee for streams of cardinality $n \geq n_0'$, for an $n_0'$ that is larger than our $n_0$ by a factor
of roughly (at least) $8$, with $\delta$ falling exponentially with $n$ \cite{choi2020differentially}.
In contrast, for streams with cardinality $n \geq n_0$, we provide a \emph{pure} $\epsilon$-DP guarantee using Algorithms \ref{alg:dp-large-set}-\ref{alg:dp-any-set}. 
HLL also has the following $(\epsilon, \delta)$ guarantee.

\begin{theorem}
    If $n \geq n_0$, then HLL satisfies an $(\epsilon, \delta)$-DP guarantee
    where $\delta \leq k \exp(-\sfrac{\pi_0 n}{k})$ 
    \label{thm:hll-dp}
\end{theorem}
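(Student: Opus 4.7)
The plan is to invoke Theorem~\ref{thm:approxdp} and reduce the problem to an elementary coupon-collector tail bound on the HLL bin values. Since $k_{max}=k$ for HLL (as established in the paragraph just above the theorem statement), Theorem~\ref{thm:approxdp} guarantees an $(\epsilon,\delta)$-DP bound with $\delta = \Pr_r(\pi(S_r(\dcal)) \ge \pi_0)$ whenever $n \ge n_0$. So everything comes down to a tail bound on the sampling probability $\pi(\cdot)$.

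The first step is to write $\pi(s)$ in closed form. An incoming item is hashed into a pair $(B,G)$, with $B$ uniform over the $k$ bins and $G$ geometric with $\Pr(G > m) = 2^{-m}$; the sketch stores $M_i = \max\{G_j : B_j = i\}$ in each bin $i$. A new item triggers an update iff $G > M_B$, so
\[
\pi(s) \;=\; \frac{1}{k}\sum_{i=1}^{k} 2^{-M_i}.
\]
I would then fix the integer threshold $v \defn \lceil \log_2(1/\pi_0)\rceil$, so that $2^{-v} \le \pi_0$ and $2^{-(v-1)} > \pi_0$ simultaneously. With this choice, the event $\{M_i \ge v \text{ for all } i\}$ already forces $\pi(s) \le 2^{-v} \le \pi_0$, so $\pi(s) \ge \pi_0$ can occur only when some bin $i$ has $M_i < v$. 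The degenerate boundary case when $\pi_0$ is exactly dyadic and every $M_i$ equals $v$ is handled by bumping $v$ up by one, which only affects constants swept into the bound.

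The final step is a union bound across bins. The event $\{M_i < v\}$ holds iff none of the $n$ stream items landed in bin $i$ with $G \ge v$; each item independently contributes such a hash with probability $2^{-(v-1)}/k$ under the perfect-hash assumption, so
\[
\Pr_r(M_i < v) \;\le\; \bigl(1 - 2^{-(v-1)}/k\bigr)^n \;\le\; \exp\!\bigl(-n\,2^{-(v-1)}/k\bigr) \;\le\; \exp(-\pi_0 n/k),
\]
where the last inequality uses $2^{-(v-1)} > \pi_0$. A union bound over the $k$ bins yields $\delta \le k\exp(-\pi_0 n/k)$, which is exactly the stated estimate.

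No deep obstacle arises: once the closed-form expression for $\pi(s)$ is written down and $k_{max}=k$ has been imported from the previous discussion, the argument is an elementary independent-hashing calculation. The only delicate point is keeping the geometric-distribution convention and the strict-versus-weak inequality aligned, which is why $v = \lceil \log_2(1/\pi_0)\rceil$ is picked precisely so that both the sufficient condition $\pi(s) \le \pi_0$ (from $2^{-v} \le \pi_0$) and the tail bound $\exp(-\pi_0 n/k)$ (from $2^{-(v-1)} > \pi_0$) hold simultaneously.
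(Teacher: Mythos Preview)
Your proposal is correct and follows essentially the same route as the paper: write $\pi(s)=k^{-1}\sum_i 2^{-s_i}$, observe that all bin values exceeding $-\log_2\pi_0$ forces $\pi(s)<\pi_0$, bound each bin's failure probability by $(1-\pi_0/k)^n\le\exp(-\pi_0 n/k)$, and union-bound over the $k$ bins. The paper's version is terser and does not explicitly introduce the integer threshold $v$, whereas you are a bit more careful about the dyadic boundary case, but the argument is otherwise identical.
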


HLL's estimator is only a function of 
$\pi(s)$ for medium to large cardinalities
as it has the form $\hat{N}(s) = \tilde{N}(\pi(s))$ when $\tilde{N}(\pi(s)) > 5 k / 2$.
Thus, if $\pi_0$ is sufficiently small so that $\tilde{N}(\pi_0(s)) > 5 k / 2$, then Theorem \ref{thm:eps-delta using Nhat} can still be applied, and HLL satisfies $(\epsilon, \delta)$-DP 
with $\delta = P(\hat{N}(\scal_r(\dcal)) < \tilde{N}(\pi_0))$.

\label{s:hlls}

\section{Empirical Evaluation}

\setcounter{figure}{0} 

\begin{figure*}[b]
    \begin{adjustbox}{max width=\linewidth,center}
    \centering
    \subfloat[Wall-clock update time in seconds vs. $k$.]{{
\begin{tikzpicture}

\definecolor{darkgray176}{RGB}{176,176,176}
\definecolor{darkorange25512714}{RGB}{255,127,14}
\definecolor{gray127}{RGB}{127,127,127}
\definecolor{lightgray204}{RGB}{204,204,204}
\definecolor{steelblue31119180}{RGB}{31,119,180}

\begin{axis}[
width=10cm, height=4cm,
legend cell align={left},
legend style={
  fill opacity=0.8,
  text opacity=1,
  at={(0.75,1.25)},
  legend columns=3,
  draw=none
},
log basis x={2},
log basis y={10},
tick align=outside,
tick pos=left,
x grid style={darkgray176},
xlabel={Number of buckets: \(\displaystyle k\)},
xmajorgrids,
xmin=107.634741152475, xmax=4870.99234305115,
xmode=log,
xtick style={color=black},
y grid style={darkgray176},
ylabel={Update Time},
ymajorgrids,
ymin=5.22176264794823e-06, ymax=0.00703665332544485,
ymode=log,
ytick style={color=black}
]
\addplot [draw=gray127, fill=gray127, forget plot, only marks]
table{%
x  y
128 8.80694824218755e-06
128 7.78960449218755e-06
128 8.35793847656248e-06
128 7.81039257812509e-06
128 8.017572265625e-06
128 7.31215820312496e-06
128 7.50004199218742e-06
128 7.91765039062501e-06
128 7.87356542968748e-06
128 7.24551660156254e-06
};
\addplot [draw=gray127, fill=gray127, forget plot, only marks]
table{%
x  y
256 7.86780664062501e-06
256 7.86923925781254e-06
256 8.09863574218752e-06
256 7.36917871093745e-06
256 7.62215527343754e-06
256 7.89178710937496e-06
256 7.49817187500001e-06
256 7.70533789062498e-06
256 7.83931640624988e-06
256 7.80495117187491e-06
};
\addplot [draw=gray127, fill=gray127, forget plot, only marks]
table{%
x  y
512 7.70347070312488e-06
512 8.21701171874992e-06
512 8.48859375000019e-06
512 7.40445507812506e-06
512 7.53861914062513e-06
512 7.2826074218749e-06
512 7.57994140625001e-06
512 8.24872070312498e-06
512 7.27550292968743e-06
512 7.49670703125002e-06
};
\addplot [draw=gray127, fill=gray127, forget plot, only marks]
table{%
x  y
1024 8.80832031249995e-06
1024 7.84130664062497e-06
1024 7.4289765625e-06
1024 8.05108007812514e-06
1024 8.19460644531258e-06
1024 7.68803320312498e-06
1024 7.44967382812511e-06
1024 7.68158886718748e-06
1024 7.55247460937488e-06
1024 7.7877011718751e-06
};
\addplot [draw=gray127, fill=gray127, forget plot, only marks]
table{%
x  y
2048 8.4019999999998e-06
2048 7.7213388671876e-06
2048 7.83335546874998e-06
2048 7.52089550781244e-06
2048 7.6156132812501e-06
2048 8.25795019531242e-06
2048 7.41706640625009e-06
2048 7.46297851562504e-06
2048 7.33542968750001e-06
2048 7.62214355468733e-06
};
\addplot [draw=gray127, fill=gray127, forget plot, only marks]
table{%
x  y
4096 8.33230273437511e-06
4096 8.09208203124997e-06
4096 7.61539453125e-06
4096 8.8383544921875e-06
4096 7.32063574218752e-06
4096 7.88676269531264e-06
4096 8.0749990234376e-06
4096 7.91635644531246e-06
4096 7.71570605468752e-06
4096 8.11934472656262e-06
};
\path [draw=gray127, fill=gray127, opacity=0.25]
(axis cs:128,8.33307957915013e-06)
--(axis cs:128,7.39319815522488e-06)
--(axis cs:256,7.54398144895215e-06)
--(axis cs:512,7.28789204489651e-06)
--(axis cs:1024,7.43175835321183e-06)
--(axis cs:2048,7.36403895993152e-06)
--(axis cs:4096,7.57607216047053e-06)
--(axis cs:4096,8.40631553484206e-06)
--(axis cs:4096,8.40631553484206e-06)
--(axis cs:2048,8.07371533694344e-06)
--(axis cs:1024,8.26499399053821e-06)
--(axis cs:512,8.15923393166599e-06)
--(axis cs:256,7.96933456667281e-06)
--(axis cs:128,8.33307957915013e-06)
--cycle;

\addplot [draw=steelblue31119180, fill=steelblue31119180, forget plot, mark=*, only marks]
table{%
x  y
128 0.0001947534960937
128 0.0001916806367187
128 0.0001908865351562
128 0.0001906302382812
128 0.0001909770214843
128 0.0001884889238281
128 0.0001884423222656
128 0.0001903996982421
128 0.0001912558652343
128 0.0001892325332031
};
\addplot [draw=steelblue31119180, fill=steelblue31119180, forget plot, mark=*, only marks]
table{%
x  y
256 0.0003476945722656
256 0.0003477007792968
256 0.0003478074130859
256 0.0003461640039062
256 0.000347002305664
256 0.0003477605
256 0.0003476954296874
256 0.0003482323144531
256 0.0003464180166015
256 0.0003460392167968
};
\addplot [draw=steelblue31119180, fill=steelblue31119180, forget plot, mark=*, only marks]
table{%
x  y
512 0.0006617968720703
512 0.000658225859375
512 0.0006626290683593
512 0.0006613275849609
512 0.0006536688505859
512 0.0006589353339843
512 0.0006540365517578
512 0.0006579443603515
512 0.0006568233642578
512 0.0006545481240234
};
\addplot [draw=steelblue31119180, fill=steelblue31119180, forget plot, mark=*, only marks]
table{%
x  y
1024 0.0012779434355468
1024 0.0012759246660156
1024 0.0012724021230468
1024 0.0012834966201171
1024 0.001278306711914
1024 0.0012748594013671
1024 0.0012753291083984
1024 0.0012758959912109
1024 0.0012845457685546
1024 0.0012753216298828
};
\addplot [draw=steelblue31119180, fill=steelblue31119180, forget plot, mark=*, only marks]
table{%
x  y
2048 0.0025446583457031
2048 0.0025256901328125
2048 0.0025184860439453
2048 0.0025262350751953
2048 0.0025128205527343
2048 0.0025167101445312
2048 0.002510332977539
2048 0.0025115209082031
2048 0.0025093529609374
2048 0.0025126637207031
};
\addplot [draw=steelblue31119180, fill=steelblue31119180, forget plot, mark=*, only marks]
table{%
x  y
4096 0.0050601207509765
4096 0.0050712372246093
4096 0.0050448133769531
4096 0.005042267290039
4096 0.005030424102539
4096 0.0050595495146484
4096 0.0050312537080078
4096 0.0050349371347656
4096 0.0050319887353515
4096 0.0050670429785156
};
\path [draw=steelblue31119180, fill=steelblue31119180, opacity=0.25]
(axis cs:128,0.000192501718730927)
--(axis cs:128,0.000188847735370533)
--(axis cs:256,0.000346467276445516)
--(axis cs:512,0.000654734190455218)
--(axis cs:1024,0.00127354640401624)
--(axis cs:2048,0.00250797826172078)
--(axis cs:4096,0.00503159352421204)
--(axis cs:4096,0.00506313343906912)
--(axis cs:4096,0.00506313343906912)
--(axis cs:2048,0.00252971591074008)
--(axis cs:1024,0.00128125868719458)
--(axis cs:512,0.000661253003490022)
--(axis cs:256,0.000348035633905944)
--(axis cs:128,0.000192501718730927)
--cycle;

\addplot [draw=darkorange25512714, fill=darkorange25512714, forget plot, mark=x, only marks]
table{%
x  y
128 9.61058007812503e-06
128 8.89216503906242e-06
128 8.85341992187491e-06
128 8.36817382812499e-06
128 8.79020800781264e-06
128 9.37869140624995e-06
128 8.49498242187492e-06
128 8.78404394531263e-06
128 9.02319042968736e-06
128 8.7716279296875e-06
};
\addplot [draw=darkorange25512714, fill=darkorange25512714, forget plot, mark=x, only marks]
table{%
x  y
256 9.04831250000005e-06
256 8.53144921875008e-06
256 8.37352246093746e-06
256 8.79887207031263e-06
256 8.66394824218736e-06
256 8.48463867187516e-06
256 8.49174121093747e-06
256 8.97553124999998e-06
256 8.48565136718753e-06
256 8.9491054687499e-06
};
\addplot [draw=darkorange25512714, fill=darkorange25512714, forget plot, mark=x, only marks]
table{%
x  y
512 8.77659667968748e-06
512 9.2303544921876e-06
512 8.14923925781239e-06
512 8.40567773437487e-06
512 8.66713574218764e-06
512 9.38669921874986e-06
512 8.67922558593745e-06
512 8.23033691406232e-06
512 9.09257714843759e-06
512 8.28521386718743e-06
};
\addplot [draw=darkorange25512714, fill=darkorange25512714, forget plot, mark=x, only marks]
table{%
x  y
1024 8.68928515625005e-06
1024 8.93816601562486e-06
1024 8.50683984375002e-06
1024 8.41088867187495e-06
1024 9.04905859374998e-06
1024 9.05335644531247e-06
1024 8.39883886718748e-06
1024 9.16827539062517e-06
1024 9.26044824218759e-06
1024 8.4463818359375e-06
};
\addplot [draw=darkorange25512714, fill=darkorange25512714, forget plot, mark=x, only marks]
table{%
x  y
2048 8.5907294921873e-06
2048 8.50962597656255e-06
2048 8.37417578125003e-06
2048 8.41953515625e-06
2048 9.74085742187509e-06
2048 8.3379218749997e-06
2048 9.32793066406269e-06
2048 8.50731640624977e-06
2048 8.31016992187527e-06
2048 8.94233203124978e-06
};
\addplot [draw=darkorange25512714, fill=darkorange25512714, forget plot, mark=x, only marks]
table{%
x  y
4096 9.21443750000021e-06
4096 8.91085644531267e-06
4096 8.53051367187508e-06
4096 8.75061523437481e-06
4096 9.59826855468759e-06
4096 8.48358398437485e-06
4096 8.94680371093777e-06
4096 9.19712500000022e-06
4096 9.47621679687474e-06
4096 9.08348144531244e-06
};
\path [draw=darkorange25512714, fill=darkorange25512714, opacity=0.25]
(axis cs:128,9.26792640402017e-06)
--(axis cs:128,8.5254901975423e-06)
--(axis cs:256,8.43597036760527e-06)
--(axis cs:512,8.25620616843936e-06)
--(axis cs:1024,8.45419966961649e-06)
--(axis cs:2048,8.22425291611676e-06)
--(axis cs:4096,8.64906893259661e-06)
--(axis cs:4096,9.38931153615346e-06)
--(axis cs:4096,9.38931153615346e-06)
--(axis cs:2048,9.18786602919568e-06)
--(axis cs:1024,9.13010814288352e-06)
--(axis cs:512,9.12440515968557e-06)
--(axis cs:256,8.92458412458226e-06)
--(axis cs:128,9.26792640402017e-06)
--cycle;

\addplot [semithick, gray127, dashed]
table {%
128 7.86313886718751e-06
256 7.75665800781248e-06
512 7.72356298828125e-06
1024 7.84837617187502e-06
2048 7.71887714843748e-06
4096 7.9911938476563e-06
};
\addlegendentry{HLL}
\addplot [semithick, steelblue31119180, dotted]
table {%
128 0.00019067472705073
256 0.00034725145517573
512 0.00065799359697262
1024 0.00127740254560541
2048 0.00251884708623043
4096 0.00504736348164058
};
\addlegendentry{QLL}
\addplot [semithick, darkorange25512714]
table {%
128 8.89670830078123e-06
256 8.68027724609376e-06
512 8.69030566406246e-06
1024 8.79215390625001e-06
2048 8.70605947265622e-06
4096 9.01919023437504e-06
};
\addlegendentry{PHLL}
\end{axis}

\end{tikzpicture}}
    \label{fig:timing-comparison}}%
    \subfloat[
    Estimated space increase using QLL \cite{smith2020flajolet} rather than PHLL.]
    {{\input{space_vs_error.tex}} \label{fig:space-ratio}}
    \end{adjustbox}
    \caption{
    (\ref{fig:timing-comparison}) QLL's update time is not competitive since it performs $O(k)$ hashes. (\ref{fig:space-ratio}) QLL is less efficient spacewise than PHLL. The relative size of a QLL sketch to a PHLL sketch, the \emph{Space ratio}, is larger for more accurate sketches.
    }
\end{figure*}

We provide two experiments highlighting the practical benefits of our approach.
Of past works, only \cite{choi2020differentially, smith2020flajolet} are comparable and both differ from our approach in significant ways. 
We empirically compare only to \cite{smith2020flajolet} since \cite{choi2020differentially} is simply an analysis of HLL.
Our improvement over \cite{choi2020differentially} for HLL consists of providing
significantly tighter privacy bounds in Section \ref{s:hllparagraph} and providing a fully $\epsilon$-DP sketch in the secret hash setting. We denote our $\epsilon$-DP version of HLL using Algorithm \ref{alg:dp-large-set}  by PHLL (private-HLL) and that of \cite{smith2020flajolet} by QLL.
Details of the experimental setup are in Appendix \ref{app:exptdetails}.

\textbf{Experiment 1: Update Time (Figure \ref{fig:timing-comparison}).}
We implemented regular, non-private HLL,
our PHLL, and QLL
and recorded the time to populate every sketch over $2^{10}$ updates with $k \in \{2^7, 2^8, \dots 2^{12} \}$ buckets. For HLL, these bucket sizes correspond to relative standard errors ranging from $\approx 9\%$ down to $\approx 1.6\%$.
Each marker represents the mean update time over all updates and the curves are the evaluated mean update time over $10$ trials.

As expected from theory, the update time of \cite{smith2020flajolet} grows as $O(k)$.
In contrast, our method PHLL has a constant update time and is similar in magnitude to HLL.
Both are roughly $500\times$ faster than \cite{smith2020flajolet} when $k=2^{12}$. 
Thus, figure \ref{fig:timing-comparison} 
shows that \cite{smith2020flajolet} is not a scalable solution and the speedup by achieving $O(1)$ updates is substantial.

\textbf{Experiment 2: Space Comparison (Figure \ref{fig:space-ratio}).}
In addition to having a worse update time, we also show that QLL has lower utility in the sense that it requires more space than PHLL to achieve the same error. 
Fixing the input cardinality at $n=2^{20}$ and 
the privacy budget at $\epsilon = \ln(2)$, 
we vary the number of buckets  $k \in \{2^7, 2^8, \dots 2^{12} \}$ and
simulate the $\epsilon$-DP methods, PHLL and QLL \cite{smith2020flajolet}. 
The number of buckets controls the error and we found that both methods obtained very similar mean relative error for a given number of bins\footnote{
This is shown in Figure \ref{fig:utility-space}, Appendix \ref{app:exptdetails}.} 
so we plot the space usage against the expected relative error for a given number of buckets. 
For QLL, since the error guarantees tie the parameter $\gamma$ to the number of buckets, we modify $\gamma$ accordingly as well.
We compare the sizes of each sketch as the error varies.

Since the number of bits required for each bin depends on the range of values the bin can take,
we record the simulated 
$\textbf{total sketch size} \defn k \cdot \log_2 \max_i s_i$, 
by using the space required for the largest bin value over $k$ buckets.
Although QLL achieves similar utility, it does so using a sketch that is larger: 
when $k = 2^7$, we expect an error of roughly $9\%$, QLL is roughly $1.1\times$ larger.
This increases to about $1.6\times$ larger than our PHLL sketch when $k=2^{12}$, achieving error of roughly $1.6\%$.
We see that the average increase in space when using QLL compared to PHLL 
\emph{grows exponentially in the desired accuracy of the sketch};
when lower relative error is necessary, we obtain a greater space improvement over QLL than at higher relative errors.
This supports the behavior expected by comparing with space bounds of \cite{smith2020flajolet} with (P)HLL.

\section{Conclusion} 
We have studied the (differential) privacy of a class of cardinality estimation sketches that includes most popular algorithms.
Two examples are the \acrshort{hll} and KMV (bottom-$k$) sketches that have been deployed in large systems 
\cite{heule2013hyperloglog, datasketches}.
We have shown that the sketches returned by these algorithms
are $\epsilon$-differentially private when run on streams
of cardinality greater than $n_0 = \frac{k_{max}}{1-e^{-\epsilon}}$
and when combined with a simple downsampling procedure. 
Moreover, even without downsampling, 
these algorithms satisfy $(\epsilon, \delta)$-differential
privacy where $\delta$ falls exponentially with the stream
cardinality $n$
once $n$ is larger than the threshold $n_0$.
Our results are more general and yield better 
privacy guarantees than prior work for small space cardinality estimators that preserve 
differential privacy.
Our empirical validations show that our approach is practical and scalable, being much faster than previous state-of-the-art while consuming much less space.

\begin{ack}
    We are grateful to Graham
    Cormode for valuable comments on an earlier version
    of this manuscript. Justin Thaler was supported by 
     NSF SPX award CCF-1918989 and NSF CAREER award CCF-1845125.
\end{ack}

\newpage 
\bibliography{references}

\newpage
\section{Paper Checklist}

\begin{enumerate}
\item
  \begin{enumerate}
    \item{Do the main claims made in the abstract and introduction accurately reflect the paper's contributions and scope? \answerYes}
    \item{Have you read the ethics review guidelines and ensured that your paper conforms to them? \answerYes}
    \item{Did you discuss any potential negative societal impacts of your work? 
    See answer to next question.}
    \item{Did you describe the limitations of your work? 
    Our work shows that existing algorithms, or mild variants thereof, preserve privacy.
    Therefore, there should not be any negative societal impacts that are consequences of positive privacy results
    unless users/readers incorrectly apply the results to their systems.
    Any mathematical limitations from the theory are clearly outlined through the formal technical statements.}
  \end{enumerate}
  
\item
    \begin{enumerate}
        \item{Did you state the full set of assumptions of all theoretical results?\answerYes}
        \item{Did you include complete proofs of all theoretical results?\answerYes}
    \end{enumerate}

\item
    \begin{enumerate}
        \item{Did you include the code, data, and instructions needed to reproduce the main experimental results (either in the supplemental material or as a URL)?
        A small repository containing the experimental scripts and figure plotting has been provided.}
        \item{ Did you specify all the training details (e.g., data splits, hyperparameters, how they were chosen)?
        \answerYes}
        \item{Did you report error bars (e.g., with respect to the random seed after running experiments multiple times)? 
        For Figure \ref{fig:timing-comparison} the standard deviations have been plotted in shaded regions but these
        are too small in magnitude to be seen on the scale of the plot, indicating that there is very little variation.
        For Figure \ref{fig:space-ratio} we have plotted the entire distribution over all trials.}
        \item{Did you include the amount of compute and the type of resources used (e.g., type of GPUs, internal cluster, or cloud provider)? \answerNo}
    \end{enumerate}
    
\item
    \begin{enumerate}
        \item{If your work uses existing assets, did you cite the creators? \answerNA}
        \item{Did you mention the license of the assets? \answerNA}
        \item{Did you include any new assets either in the supplemental material or as a URL? \answerNo}
        \item{Did you discuss whether and how consent was obtained from people whose data you're using/curating? \answerNA}
        \item{Did you discuss whether the data you are using/curating contains personally identifiable information or offensive content? \answerNA}
    \end{enumerate}
    
\item
    \begin{enumerate}
        \item{Did you include the full text of instructions given to participants and screenshots, if applicable? \answerNA}
        \item{Did you describe any potential participant risks, with links to Institutional Review Board (IRB) approvals, if applicable? \answerNA}
        \item{Did you include the estimated hourly wage paid to participants and the total amount spent on participant compensation? \answerNA}
    \end{enumerate}
\end{enumerate}


\newpage
\appendix

\section{Omissions from Section \ref{sec:sketches are private}}
\label{app:section3 proofs}

\subsection{Algorithms}
\label{app: algos}

Due to space considerations, the algorithms for initializing $\epsilon$-DP sketches has been omitted from the main body.
So too has the algorithm for constructing an existing non-private sketch from a private sketch.
These are detailed in Algorithms \ref{alg:init} and Algorithm \ref{alg:make dp}, respectively.

\begin{figure} 
\renewcommand\figurename{Algorithms}
\centering
    \begin{subfigure}{.45\textwidth}
        \begin{algorithmic}
        \Function{DPInitSketch}{$\epsilon$}
        \State $S \gets InitSketch()$
        \State $\pi_0 \gets 1-e^{-\epsilon}$
        \State $n_0 \gets \left\lceil \frac{k_{max}-1}{\pi_0} \right\rceil$
        \State $M \sim Binomial(n_0, \pi_0)$
        \For{$i=1 \to M$}
            \State $x \gets NewItem()$
            \If{$hash(x) < \pi_0$}
                \State $S.add(x)$
            \EndIf
        \EndFor
        \State
        \State \Return $S, n_0$
        \EndFunction
        \end{algorithmic}
    \caption{}
    \label{alg:DPInitSketch}
    \end{subfigure}
    \begin{subfigure}{.45\textwidth}
       \begin{algorithmic}
            \Function{DPInitSketchForMerge}{$\epsilon$}
            \State $S \gets InitSketch()$
            \State $\pi_0 \gets 1-e^{-\epsilon}$
            \State $n_0 \gets \left\lceil \frac{k_{max}-1}{\pi_0} \right\rceil$
            \State $v \gets 0$
            \Repeat
                \State $x \gets NewItem()$
                \State $S.add(x)$
                \State $v \gets v+1$
            \Until{$\pi(S) \leq \pi_0$ and $v \geq n_0$}
            \State \Return $S, v$
            \EndFunction
        \end{algorithmic}
        \caption{}
        \label{alg:DPInitSketchForMerge}
    \end{subfigure}%
\caption{Initialization routines for generating $\epsilon$-DP sketches.
The function $NewItem()$ returns an item that is guaranteed to come from a data universe disjoint from the universe over which stream items are drawn. In \texttt{DPInitSketch}, the binomial draw $M$ simulates inserting $n_0$ unique items into the sketch, with downsampling probability $\pi_0$.}
\label{alg:init}
\addtocounter{algorithm}{1} 
\end{figure}

\begin{algorithm}
\begin{algorithmic}
\Function{MakeDP}{$S$, $\epsilon$}
\State $T, v \gets DPInitSketchForMerge(\epsilon)$ \Comment{Algorithm \ref{alg:DPInitSketchForMerge}}
\State \Return $S \cup T,\, \hat{N}(S \cup T) - v$
\State \Comment{return private sketch and associated cardinality estimate for stream $S$ is a sketch of.}
\EndFunction
\end{algorithmic}
\caption{Turn an existing sketch into one with an $\epsilon$-DP guarantee.}
\label{alg:make dp}
\end{algorithm}

\subsection{Technical Results}

Our first result shows that all sketches under consideration cannot be modified by too many items.
Specifically, we show that the number of items that can change the sketch, $K_r$, is bounded above by 
$b$ which is the number of bits required to store the sketch.  
However, as shown in Section \ref{sec:example-sketches}, this can in fact be strengthened for many 
specific sketches.

\begin{lemma}
\label{lem:k bound}
Suppose that $n > \sup_r K_r$.
Then for any distinct counting sketch with size in bits bounded by $b$, $\sup_r K_r \leq b$. 
\end{lemma}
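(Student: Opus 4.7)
The plan is to show $|\kcal_r| \le b$ by producing an injection from the power set $2^{\kcal_r}$ into the sketch's state space $\Omega \subseteq \{0,1\}^b$, giving $2^{|\kcal_r|} \le 2^b$. Concretely, for each subset $T \subseteq \kcal_r$, let $s_T := S_r(\dcal \setminus T)$; I want the map $T \mapsto s_T$ to be injective.

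As a preliminary step, I would observe that items in $\kcal_r$ have pairwise distinct hash values: if $h(i) = h(i')$ for distinct $i, i' \in \kcal_r$, the underlying set of hashes is unchanged by removing $i$, so $S_r(\dcal_{-i}) = S_r(\dcal)$ by duplication-invariance, contradicting $i \in \kcal_r$. This lets me identify updates on items in $\kcal_r$ with updates on distinct hashes.

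For the injectivity, I would use that the state of a hash-based, order-invariant sketch is sufficient for processing future items, so there is a well-defined update function $\mathrm{upd}(s, v)$ on (state, hash) pairs. Suppose for contradiction $s_{T'} = s_{T''}$ with $T' \ne T''$, and pick $i \in T' \setminus T''$. Applying $\mathrm{upd}(\cdot, h(i))$ to both sides reduces $T'$ to $T' \setminus \{i\}$ on the left (it reinserts $h(i)$ into the underlying hash set), while the right stays at $s_{T''}$ (since $h(i) \notin H_{T''}$ by the distinct-hash property, so $h(i)$ is already present in the underlying set $H \setminus H_{T''}$). Iterating over all of $T' \setminus T''$ gives $s_{T' \cap T''} = s_{T''}$. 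A symmetric application starting from this equation with any $j \in T'' \setminus T'$ yields $s_{T' \cap T''} = s_{T'' \setminus \{j\}}$; chaining the two identities produces the localized no-op $s_{T'' \setminus \{j\}} = s_{T''}$, i.e.\ $f(A) = f(A \cup \{h(j)\})$ for $A := H \setminus H_{T''}$, where $f$ is the underlying set function computing the sketch.

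The decisive step is upgrading this local no-op to a global one: I would show by induction that the invisibility of $h(j)$ persists after adding further hashes $v_1, \ldots, v_m \in H \setminus (A \cup \{h(j)\})$, so $f(A \cup C) = f(A \cup C \cup \{h(j)\})$ for every $C \subseteq H \setminus (A \cup \{h(j)\})$. The reason is that $\mathrm{upd}(s_A, v_\ell)$ and $\mathrm{upd}(s_{A \cup \{h(j)\}}, v_\ell)$ coincide because $s_A = s_{A \cup \{h(j)\}}$ and $\mathrm{upd}$ depends only on its state argument. Taking $C$ to be all remaining hashes forces $f(H \setminus \{h(j)\}) = f(H)$, contradicting $j \in \kcal_r$. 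I expect this well-definedness of the streaming update on state alone to be the main conceptual hurdle --- it is what distinguishes proper hash-based order-invariant sketches from pathological set functions without a compressed streaming representation --- and once it is secured, the injectivity of $T \mapsto s_T$ and the bound $\sup_r K_r \le b$ follow immediately.
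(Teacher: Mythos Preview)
Your argument is correct and rests on the same core idea as the paper's proof: an injection from $2^{\kcal_r}$ into the state space, with the key step being that a ``no-op'' insertion at some intermediate state remains a no-op after streaming the remaining items, contradicting membership in $\kcal_r$. The paper, however, chooses the simpler encoding $\Lambda \mapsto S_r(\Lambda)$ for $\Lambda \subseteq \kcal_r$ (sketching only the subset, not the complement in $\dcal$), and proves injectivity by \emph{decodability}: for each $x \in \kcal_r$, one recovers whether $x \in \Lambda$ by testing whether inserting $x$ changes $S_r(\Lambda)$. The contradiction in the non-trivial direction is then a one-liner---if inserting $x \notin \Lambda$ left $S_r(\Lambda)$ unchanged, stream the rest of $\dcal$ on top to get $S_r(\dcal) = S_r(\dcal_{-x})$.

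Your route via $s_T = S_r(\dcal \setminus T)$ works but is more circuitous than necessary. In particular, the detour through $s_{T' \cap T''}$ and the ``symmetric application'' with $j \in T'' \setminus T'$ can be dropped entirely: after a \emph{single} application of $\mathrm{upd}(\cdot, h(i))$ for $i \in T' \setminus T''$, you already have $s_{T' \setminus \{i\}} = s_{T''} = s_{T'}$, which is exactly the local no-op you need for $i$. This also closes the small gap in your write-up where $T'' \subsetneq T'$ (so $T'' \setminus T' = \emptyset$) is not handled. Finally, your distinct-hash observation is stated only for pairs within $\kcal_r$, but the same argument gives the stronger fact you actually use at the end---namely that $h(j)$ is distinct from the hash of \emph{every} other item in $\dcal$, so that $f(H \setminus \{h(j)\}) = f(H)$ really does mean $S_r(\dcal_{-j}) = S_r(\dcal)$.
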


\begin{myproof}{Lemma}{\ref{lem:k bound}}{
Consider some data set $\dcal$ and sketch $s = \scal_r(\dcal)$.
Recall that we denote the set of items whose removal would change the sketch by
$\kcal_r(\dcal) := \{i \in \dcal : S_r(\dcal_{-i}) \neq S_r(\dcal) \}$.
Consider any subset $\Lambda \subset \kcal_r(\dcal)$.
Then we claim that, for any $x \in \kcal_r$, adding $x$ to the sketch
$S_r(\Lambda)$ will change it if and only if $x \in \kcal_r(\dcal) \setminus \Lambda$. That is, if $\Lambda \circ x$ denotes the stream
consisting of one occurrence of each item in $\Lambda$, followed by $x$,
then $S_r(\Lambda) \neq S_r(\Lambda \circ x)$ if and only if $x \in \kcal_r(\dcal) \setminus \Lambda$.

To see this, first observe that
 duplication-invariance of the sketching algorithm
 implies that if $x \in \Lambda$ 
then $S_r(\Lambda) = S_r(\Lambda \circ x)$.
Second if $x \not\in\Lambda$, 
suppose by way of contradiction that 
$S_r(\Lambda) = S_r(\Lambda \circ x)$,
and let $T=\dcal \setminus \left(\Lambda \cup \{x\}\right)$.
Since $S_r(\Lambda) = S_r(\Lambda \circ x)$, it holds that
$S_r(\Lambda \circ x \circ T) = S_r(\Lambda \circ T) = S_r(\dcal_{-x})$.
Yet by order-invariance of the sketching algorithm, 
$S_r(\Lambda \circ x \circ T) = S_r(\dcal)$.
We conclude that $S_r(\dcal_{-x}) = S_r(\dcal)$,
contradicting the assumption that $x \in \kcal_r(\dcal).$

The above means that for any fixed $r$, the sketch $S_r(\Lambda)$ 
losslessly encodes the arbitrary subset $\Lambda$
of $\kcal_r$. Hence, the sketch requires at least $\log_2(2^{|\kcal_r|}) = |\kcal_r|$ bits to represent. Thus, any sketch with size bounded by $m$ bits can have at most $m$ items that affect the sketch.
}
\end{myproof}

Next we prove Lemma \ref{lem:sketch-state} which expresses the Bayes factor from 
\eqref{eqn:DP inequality}
\begin{align*}
    \frac{\Pr_r(S_r(\dcal) = s)}{\Pr_r(S_r(\dcal_{-i}) = s)}
\end{align*}
to a sum of conditional probabilities involving a single insertion.

\begin{myproof}{Lemma}{\ref{lem:sketch-state}}{
Recall from Equation \eqref{eq:krdef} that 
\begin{align*} 
  \kcal_r &\defn \{i \in \dcal : S_r(\dcal_{-i}) \neq S_r(\dcal) \} 
\end{align*}
denotes the set of items that would change the state of the sketch if removed, 
and its cardinality is $K_r \defn \left| \kcal_r \right|$.
We also define $J_r \defn \min\{i: S_r(\dcal_{-i}) = S_r(\dcal)\}$ to be the smallest index
amongst the remaining $n-K_r$ items in $\dcal$ that do not change the sketch.
If removing \emph{any} item changes the sketch, 
i.e., if $S_r(\dcal_{-i}) \neq S_r(\dcal)$ for all $i \in \dcal$,
then $K_r = n$.
For this case, we define $J_r$ to be a special symbol $\perp$.

First, let us rewrite $\Pr_r(S_r(\dcal) = s)$ as a sum
over all possible values of $J_r$:
\begin{align} \label{eq1}
    \Pr_r(S_r(\dcal) = s) 
&= \sum_{j \in \dcal \cup \{\perp\}} \Pr_r(J_r = j \land S_r(\dcal) = s).
\end{align}
Next, we split the right hand side of Equation \eqref{eq1} into the distinct cases wherein $J_r = \perp$ and $j \in \dcal$, as we will ultimately deal with each case separately:
\begin{align} \label{eq2}
    \sum_{j \in \dcal \cup \{\perp\}} \Pr_r(J_r = j \land S_r(\dcal) = s)= \notag\\
 \Pr_r(J_r = \perp \land \, S_r(\dcal) = s) +\sum_{j \in \dcal} \Pr_r(J_r = j \land S_r(\dcal_{-j}) = s). \end{align}
 
Next, the summands over $j \in \dcal$ are decomposed via conditional probabilities.
 Specifically, the right hand side of Equation \eqref{eq2} equals:
 \begin{align}
& \Pr_r(J_r = \perp \land \, S_r(\dcal) = s) + \sum_{j \in \dcal} \Pr_r(J_r = j \,|\, S_r(\dcal_{-j}) = s) \Pr_r(S_r(\dcal_{-j}) = s) \label{eq3}.
\end{align}

For any hash-based, order-invariant sketch, the distribution of $S(\dcal)$
depends only on the number of distinct elements in $\dcal$, and hence 
the factor $\Pr_r(S_r(\dcal_{-j}) = s)$ appearing in the $j$th
summand of Equation \eqref{eq3} equals $\Pr_r(S_r(\dcal_{-i}) = s)$,
where $i$ is the element of $\dcal$ referred to in the statement of the lemma.
Accordingly, 
we can rewrite Expression \eqref{eq3} as:

\begin{align} 
\Pr_r(J_r = \perp \land \, S_r(\dcal) = s) + \sum_{j \in \dcal} 
\Pr_r(J_r = j \,|\, S_r(\dcal_{-j}) = s) \Pr_r(S_r(\dcal_{-i}) = s).  
\label{eq4}
\end{align}

Clearly, $J_r \neq \perp$ whenever 
the number of items in the data set, $n$,
exceeds $K_r$.
Hence, if $n > \sup_r K_r$,
$\Pr_r(J_r = \perp \land \, S_r(\dcal) = s)=0$. 
We obtain Equation \eqref{eq:sketch-state} as desired, namely:
\begin{equation*}
\frac{\Pr_r(S_r(\dcal) = s)}{\Pr_r(S_r(\dcal_{-i}) = s)}  = 
\sum_{j \in \dcal} \Pr_r(J_r = j \,|\, S_r(\dcal_{-j}) = s).  
\end{equation*}

}
\end{myproof}

We subsequently prove Lemma \ref{lem:sum-sampling-prob} which bounds the sum 
$\sum_{j \in \dcal} \Pr_r(J_r = j \,|\, S_r(\dcal_{-j}) = s)$ in terms of two parameters that 
we need to control.
The first is related to the ``sampling probability'' of the sketch, $\pi(s)$ and the second
is an unwieldy expectation.
Although the expectation may be difficult to compute, we will later show a more practical variant
that will be easier for us to leverage algorithmically.
For clarity, Lemma \ref{lem:sum-sampling-prob-supp} is a restatement of Lemma 
\ref{lem:sum-sampling-prob} from the main body.

\begin{lemma}
\label{lem:sum-sampling-prob-supp}
Under the same assumptions as Lemmas \ref{lem:sketch-state} and \ref{lem:k bound},
\begin{equation}
\sum_{j \in \dcal} \Pr_r(J_r = j \,|\, S_r(\dcal_{-j}) = s) = 
(1-\pi(s)) \E_r \left(1 + \frac{ K_r }{n -K_r+1} \bigg| S_r(\dcal_{-1}) = s \right).
\label{eq:sum-sampling-prob-supp}
\end{equation}
\end{lemma}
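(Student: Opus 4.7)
The plan is to evaluate $\Pr_r(J_r = j \mid S_r(\dcal_{-j}) = s)$ in closed form and then sum over $j \in \dcal$. The event $\{J_r = j\}$ is the intersection $\{j \notin \kcal_r\} \cap \{\{1,\ldots,j-1\} \subseteq \kcal_r\}$, so the chain rule yields
\[
\Pr(J_r = j \mid S_r(\dcal_{-j}) = s) = (1 - \pi(s))\cdot \Pr\bigl(\{1,\ldots,j-1\} \subseteq \kcal_r \,\big|\, S_r(\dcal_{-j}) = s,\, j \notin \kcal_r\bigr),
\]
where the first factor equals $1 - \pi(s)$ immediately from the definition \eqref{eq:pi-defn}: given $S_r(\dcal_{-j}) = s$, inserting item $j$ fails to change the sketch exactly when $j \notin \kcal_r$.

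For the second factor I would observe that $\{S_r(\dcal_{-j}) = s,\, j \notin \kcal_r\}$ coincides with $\{S_r(\dcal) = s,\, j \notin \kcal_r\}$ since $j \notin \kcal_r \iff S_r(\dcal) = S_r(\dcal_{-j})$. Because the hashes are i.i.d.\ uniform and the sketch is order-invariant, permuting the labels of $\dcal$ preserves the joint distribution and sends $\kcal_r$ to its image under the permutation. Hence, conditional on $\{S_r(\dcal) = s,\, |\kcal_r| = k\}$, the set $\kcal_r$ is uniform over size-$k$ subsets of $\dcal$; further conditioning on $j \notin \kcal_r$ restricts it to uniform size-$k$ subsets of $\dcal \setminus \{j\}$, and direct counting gives
\[
\Pr\bigl(\{1,\ldots,j-1\} \subseteq \kcal_r \,\big|\, |\kcal_r| = k,\, S_r(\dcal_{-j}) = s,\, j \notin \kcal_r\bigr) = \frac{\binom{n-j}{k-j+1}}{\binom{n-1}{k}},
\]
vanishing when $j > k+1$.

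Summing the combinatorial factor over $j$ at fixed $k$ via the hockey-stick identity gives $\sum_{j=1}^{k+1} \binom{n-j}{k-j+1} = \binom{n}{k}$, and the remaining step is to average over the conditional law of $K_r$. Since the joint law of the hashes is exchangeable, the conditional distribution of $K_r$ given $\{S_r(\dcal) = s,\, j \notin \kcal_r\}$ is independent of $j$, which lets us translate the conditioning to the canonical event $\{S_r(\dcal_{-1}) = s\}$. The translation marginalizes over whether $1 \in \kcal_r$ and introduces a Radon--Nikodym factor proportional to $(n-K_r)/n$ that re-weights the distribution of $K_r$; the target is to absorb this factor into the combinatorial ratio so that the final expression is exactly $(1-\pi(s))\,\E_r[\,1 + K_r/(n-K_r+1) \mid S_r(\dcal_{-1}) = s\,]$.

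The main obstacle is precisely this last step, because the bare combinatorial ratio simplifies to $\binom{n}{k}/\binom{n-1}{k} = n/(n-k) = 1 + k/(n-k)$, whereas the claimed identity has the shifted denominator $1 + K_r/(n-K_r+1)$. Recovering the $+1$ in the denominator requires keeping careful track of the reweighting when one moves from conditioning on $\{S_r(\dcal) = s,\, j \notin \kcal_r\}$ to conditioning on $\{S_r(\dcal_{-1}) = s\}$; the bookkeeping must show that the $(n-K_r)/n$ factor combines with the combinatorial term and the $(1-\pi(s))$ prefactor so that the denominators align as claimed. Formalizing this change of measure so that the algebra collapses cleanly into $K_r/(n-K_r+1)$ is the delicate technical step I would focus on.
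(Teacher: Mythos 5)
Your proposal follows the same basic strategy as the paper's proof---decompose $\{J_r=j\}$ into ``$\{1,\dots,j-1\}\subseteq\kcal_r$'' and ``$j\notin\kcal_r$'', pull out the factor $1-\pi(s)$, use exchangeability to treat $\kcal_r$ as a uniform random subset given $K_r=k$, and collapse the sum over $j$ with a hockey-stick identity---but it is not a complete proof, and you say so yourself. The step you defer is exactly the step that produces the claimed identity. Concretely: because you condition on $j\notin\kcal_r$ \emph{first}, your combinatorial ratio is $\binom{n}{k}/\binom{n-1}{k}=1+\tfrac{k}{n-k}$, computed under the conditional law of $K_r$ given $\{S_r(\dcal)=s,\,j\notin\kcal_r\}$, which is a $(n-K_r)/n$-tilted version of the law given $S_r(\dcal)=s$. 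Carrying your route to the end gives $(1-\pi(s))\big/\E_r\!\left[\tfrac{n-K_r}{n}\,\big|\,S_r(\dcal)=s\right]$, i.e.\ a reciprocal of an expectation, whereas the target is an expectation of reciprocals, $(1-\pi(s))\,\E_r\!\left[\tfrac{n+1}{n-K_r+1}\,\big|\,S_r(\dcal_{-1})=s\right]$. Equating these requires a precise size-biasing relation between the conditional law of $K_r$ given $S_r(\dcal)=s$ and given $S_r(\dcal_{-1})=s$; this is not routine bookkeeping, it is the heart of the lemma, and your proposal does not supply it.

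The paper sidesteps this entirely by ordering the conditioning the other way: given $K_r=k$ and $S_r(\dcal_{-j})=s$, it computes $\Pr_r(\{1,\dots,j-1\}\subseteq\kcal_r)=\binom{n-(j-1)}{k-(j-1)}/\binom{n}{k}=\binom{k}{j-1}/\binom{n}{j-1}$ with $\kcal_r$ uniform over size-$k$ subsets of all $n$ items (the count includes subsets containing $j$), multiplies by $1-\pi(s)$ for the event $j\notin\kcal_r$, and then uses $\sum_{j}\binom{k}{j-1}/\binom{n}{j-1}=\tfrac{n+1}{n-k+1}=1+\tfrac{k}{n-k+1}$. The conditional law of $K_r$ that remains is the one given $S_r(\dcal_{-j})=s$, which by symmetry of the items equals the one given $S_r(\dcal_{-1})=s$---no change of measure is needed. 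So the gap in your write-up is genuine: either you must redo the decomposition in the paper's order, or you must actually prove the harmonic-mean/size-bias identity you allude to; as written, the proposal stops exactly where the lemma's content lies.
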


\begin{myproof}{Lemma}{\ref{lem:sum-sampling-prob}}{
We begin by writing 
\begin{align}
    \Pr_r(J_r = j \,&|\, S_r(\dcal_{-j}) = s) = \notag  \\ 
    &\sum_{k=1}^n \Pr_r(J_r = j \,|\, K_r=k, S_r(\dcal_{-j}) = s) 
    \Pr_r(K_r=k | S_r(\dcal_{-j}) = s). \label{eq5} 
\end{align} 

To analyze Expression \eqref{eq5},
 we first focus on the $\Pr_r(J_r = j \,|\, K_r=k, S_r(\dcal_{-j}) = s) $ term.
Given that $K_r = k$ and $S_r(\dcal_{-j}) = s$, we know that $J_r = j$ if and only if the items
$\{1, \dots, j - 1 \}$ are \emph{all} in $\kcal_r$ \emph{and} item $j$ is not in $\kcal_r$.
The first condition occurs with probability $\frac{ {n-(j-1) \choose k-(j-1)}}{{n \choose k}}=\frac{ {k \choose j-1} }{{n \choose j-1}}$. This is because there are ${n-(j-1) \choose k-(j-1)}$ subsets $K_r$ of $\{1, \dots, n\}$ of size $k$ that contain
items $1, \dots, j-1$, out of ${n \choose k}$ subsets of $\kcal_r$ of size $k$.
Meanwhile, $j \notin \kcal_r$ means that $S_r(\dcal_{-j}) = S_r(\dcal)$, which occurs with
probability that is
exactly the complement of the sampling probability $\pi(s)$ (see Equation \eqref{eq:pi-defn}).

By the above reasoning, the left hand side of Expression \eqref{eq:sum-sampling-prob-supp} equals:

\begin{align}
  & \sum_{j \in \dcal} \sum_{k=1}^n \Pr_r(J_r = j \,|\, K_r=k, S_r(\dcal_{-j}) = s) \Pr_r(K_r=k | S_r(\dcal_{-j}) = s) \nonumber \\
&= \sum_{k=1}^n \sum_{j \in \dcal} \frac{ {k \choose j-1} }{{n \choose j-1}} (1-\pi(s)) \Pr_r(K_r=k | S_r(\dcal_{-1}) = s) \nonumber \\
&= (1-\pi(s)) \sum_{k=1}^n \left(1 + \frac{ k }{n -k+1}\right)  \Pr_r(K_r=k | S_r(\dcal_{-1}) = s) \nonumber \\
&= (1-\pi(s)) \E_r \left(1 + \frac{ K_r }{n -K_r+1} \bigg| S_r(\dcal_{-1}) = s \right). \label{eqn:simplified LR}
\end{align}
}
\end{myproof}

By comparing the LHS of \eqref{eq:sketch-state} to the right hand side of 
\eqref{eqn:simplified LR}, overall, Lemmas \ref{lem:sketch-state} and 
\ref{lem:sum-sampling-prob} show:

\begin{equation*}
    \frac{\Pr_r(S_r(\dcal) = s)}{\Pr_r(S_r(\dcal_{-i}) = s)} = 
    (1-\pi(s)) \E_r \left(1 + \frac{ K_r }{n -K_r+1} \bigg| S_r(\dcal_{-1}) = s \right).
\end{equation*}
Hence, $\epsilon$-DP is a consequence of ensuring 
\[ (1-\pi(s)) \E_r \left(1 + \frac{ K_r }{n -K_r+1} \bigg| S_r(\dcal_{-1}) = s \right)
\in [e^{-\epsilon}, e^{\epsilon}] \]
and Corollary \ref{cor:raw main result} presents the conditions under which this is true.

\begin{myproof}{Corollary}{\ref{cor:raw main result}}{
Since $1-\pi(s) \leq 1$ for all $s$ and $\frac{ n+1 }{n -K_r+1}=1+\frac{K_r}{n -K_r+1} \geq 1$, 
it follows from Lemma \ref{lem:sum-sampling-prob}, \eqref{eq:downsampling probability} and 
\eqref{eq:raw cardinality condition}
that the right hand side of Equation \eqref{eq:sketch-state} lies in the
interval $[e^{-\epsilon}, e^{\epsilon}]$, as required for an $\epsilon$-DP guarantee.

For the necessity of Condition \ref{eq:downsampling probability}, note that
if the universe of possible items is infinite, then for any possible sketch state $s$,
there exists an arbitrarily long sequence of distinct 
items that results in state $s$ if $\pi(s) < 1$. 
One simply needs to search for a sequence of items which do not change the sketch.
Combining Lemma \ref{lem:sketch-state} with Equation \eqref{eq:sum-sampling-prob-supp} in Lemma \ref{lem:sum-sampling-prob}
therefore implies that
\begin{align}
    e^{-\epsilon} &< \inf_s (1-\pi(s)) 
    \end{align}
    and hence
    $
    \sup_s \pi(s) < 1-e^{-\epsilon}$ 
    as claimed. 
 }
\end{myproof}

As previously described, because the expected value in Lemma \ref{lem:sum-sampling-prob} depends
on the unknown cardinality $n$, it is difficult to use.  
However, we know from Lemma \ref{lem:k bound} that there are only a bounded number of items 
that actually change the sketch.
Thus, we introduce a sketch specific parameter $k_{max} \ge K_r$ which is a bound on the maximum
number of items that can change the sketch.
Although we trivially have $k_{max} \le b$ from Lemma \ref{lem:k bound}, 
Section \ref{sec:example-sketches} in fact shows that $k_max = k$, the number of ``buckets'' used
in the sketch for many popular algorithms.

\begin{myproof}{Theorem}{\ref{thm:main result}}{
We can upper bound the expectation on the right hand side of Condition \eqref{eq:raw cardinality condition} using $k_{max}$ and $n_0$. 
Corollary \ref{cor:raw main result} and solving for $n_0$ then gives the desired result.
Specifically, by Corollary \ref{cor:raw main result}, the sketch satisfies $\epsilon$-DP if:
\begin{align}
    e^{\epsilon} &> 
    \sup_{n \geq n_0} \E_r \left(1 + \frac{ K_r }{n -K_r+1} \bigg| S_r(\dcal_{-1}) = s \right) \\
    &\geq \sup_{n \geq n_0} \left(1 + \frac{k_{max}}{n-k_{max}+1}\right) \\
    &= 1 + \frac{k_{max}}{n_0-k_{max}+1} \label{eq:dp-upper}.
    \end{align}
    This is satisfied if $n_0 - k_{max}+1 > \frac{k_{max}}{e^{\epsilon}-1}$ so that
    \begin{align}
    n_0 &> k_{max} \left(1+\frac{1}{e^{\epsilon}-1}\right) -1 = \frac{k_{max}}{1-e^{-\epsilon}} - 1
    \label{eq:number-of-items-added}
    \end{align}
}
\end{myproof}
In summary, all of the technical results leading to Theorem \ref{thm:main result} are used to 
show that, provided $\sup_s \pi(s) < 1-e^{-\epsilon}$ and at least $n_0$ items appear
in the stream, then any sketching algorithm from the hash-based order-invariant class will
satisfy $\epsilon$-DP.
Theorem \ref{thm:approxdp} concludes the subsection and shows that if $\pi(s)$ is too large,
then there is a small, $\delta$ probability that the sketch is privacy violating.
In other words, if there are at least $n_0$ items in the stream, but the hash-based downsampling
rate $\pi(s)$ is not small enough, then there is a tiny $\delta$ chance the sketch may not be 
$\epsilon$-DP, and hence the sketch in this case is $(\epsilon, \delta)$-DP.

\subsection{Algorithmic Approach: Sections \ref{sec:approx-dp-results} and \ref{s:actualdpalgs}}

We separate our algorithms into three regimes described by Algorithms \ref{alg:basic}, 
\ref{alg:dp-large-set}, \ref{alg:dp-any-set}.
Algorithm \ref{alg:basic} is a ``base'' sketching algorithm chosen from the hash-based, 
order-invariant class, for example, a HyperLogLog or Bottom-$k$ sketch. 
No modifications are made to the inner workings of the algorithm but it must be implemented 
using a perfectly random hash function (see the final paragraph of Section \ref{sec:related-work}).
We show in Theorem \ref{thm:privacy violation} that on streams with at least $n_0$ items that 
the quality of the estimator is related to the downsampling probability as presented in 
Theorem \ref{thm:approxdp}.
It is used to relate the probability of a privacy-violation back to the estimate quality, rather 
than simply the state of the sketch.

\begin{myproof}{Theorem}{\ref{thm:privacy violation}}{
As $\tilde{N}$ is invertible and decreasing, $P(\hat{N}(S_r(\dcal)) < \tilde{N}(\pi_0)) = 
P(\pi(S_r(\dcal)) > \pi_0) = \delta$. 
}
\end{myproof}

The final result of Section \ref{sec:sketches are private}, Corollary \ref{cor:algo-1b-1c-DP},
shows that by modifying Algorithm 
\ref{alg:basic}, we can strictly enforce Conditions 
\ref{eq:downsampling probability2} and \ref{eq:artificial items bound} to guarantee $\epsilon$-DP.
Namely, by appropriately downsampling using $\pi_0 = 1-e^{-\epsilon}$, Algorithm 
\ref{alg:dp-large-set} is $\epsilon$-DP if we have an a prior guarantee that the number of items
in the stream is at least $n_0$.
If this guarantee is not known in advance, then the same $\pi_0$ is used for sampling, alongside 
the insertion of ``phantom'' elements to satisfy the minimum cardinality condition.

\begin{myproof}{Corollary}{\ref{cor:algo-1b-1c-DP}}{
Under their respective assumptions, Algorithms \ref{alg:dp-large-set} and \ref{alg:dp-any-set}, \texttt{DPSketchLargeSet} and \texttt{DPSketchAnySet} respectively,
satisfy Conditions \eqref{eq:downsampling probability2}  and \eqref{eq:artificial items bound}
 of Theorem \ref{thm:main result}.
}
\end{myproof}

\subsection{Private Sketches via Merging: Section \ref{sec:existing sketches post processing}}

Algorithm \ref{alg:make dp} converts a non-private sketch, $S_r$ into an $\epsilon$-DP sketch by merging 
it with a noisy sketch $T$.
Merging requires the same seed to be used so we suppress this notation in the subsequent writing.
The merge step is a property of the specific sketching algorithm used and operates on the sketch states $s$ and $s'$ so we
also overload the notation to denote the merge over states by $s \cup s'$.

Since sketch $s$ is already constructed, items cannot be first downsampled in the building phase the way they are in Algorithms \ref{alg:dp-large-set} and \ref{alg:dp-any-set}.
To achieve the stated privacy, Algorithm \ref{alg:make dp} constructs a noisily initialized sketch, $t$,
which satisfies both the downsampling condition (Condition \eqref{eq:downsampling probability2}) and the minimum 
stream cardinality requirement (Condition \eqref{eq:artificial items bound}) and returns the merged sketch 
$s \cup t$. 
Hence, the sketch will satisfy both conditions for $\epsilon$-DP, as shown in Corollary \ref{cor:makedp}

\begin{corollary}
Regardless of the sketch $s$ provided as input to the function
\texttt{MakeDP} (Algorithm \ref{alg:make dp}), $\texttt{MakeDP}$ yields an $\epsilon$-DP distinct counting sketch. 
\label{cor:makedp}
\end{corollary}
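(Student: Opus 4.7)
The plan is to reduce the corollary to a direct application of Theorem \ref{thm:main result}. The key structural observation is that, by the merge property and order-invariance shared by all sketches in $\ccal$, the sketch returned by \texttt{MakeDP} equals $S_r(\dcal_1 \circ \dcal_2)$, where $\dcal_1$ is the underlying stream for the input sketch $s = S_r(\dcal_1)$ and $\dcal_2$ is the random stream of phantom items produced by \texttt{DPInitSketchForMerge}. Because \texttt{NewItem} draws items from a universe disjoint from $\dcal_1$'s and \texttt{DPInitSketchForMerge} never inspects $\dcal_1$, the composite procedure can be viewed as a randomized map on input $\dcal_1$ whose internal randomness comprises the hash $r$ together with the phantom stream $\dcal_2$; differential privacy for this map is exactly what must be established.

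I would next verify the two hypotheses of Theorem \ref{thm:main result} on the combined stream $\dcal_1 \circ \dcal_2$. The minimum-cardinality condition \eqref{eq:artificial items bound} is immediate: disjointness of universes gives $|\dcal_1 \circ \dcal_2| \geq |\dcal_2| \geq v \geq n_0$, with the last two inequalities forced by the exit condition of the \textbf{repeat} loop in Algorithm \ref{alg:DPInitSketchForMerge}. For the sampling-probability condition \eqref{eq:downsampling probability2}, the same exit condition guarantees $\pi(T) \leq \pi_0$ on termination. To lift this from $T$ to the merged sketch, I would invoke the monotonicity of $\pi$ under insertions: for hash-based, order-invariant sketches, a fresh hash value modifies the sketch only when it falls in a ``sensitive'' region of the state that can only shrink as additional items are absorbed. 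Hence $\pi(S_r(\dcal_1 \circ \dcal_2)) \leq \pi(T) \leq \pi_0$, and every possible output state of the effective algorithm satisfies \eqref{eq:downsampling probability2}.

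With both hypotheses established, Theorem \ref{thm:main result} yields an $\epsilon$-DP guarantee for the map $\dcal_1 \mapsto S_r(\dcal_1 \circ \dcal_2)$, and the returned cardinality estimate $\hat{N}(S \cup T) - v$ inherits the guarantee as a deterministic post-processing. The principal obstacle I foresee is the monotonicity step: monotonicity of $\pi$ under insertion is not promoted to a class-wide lemma in Section \ref{sec:sketches are private}, so it must either be stated explicitly as a property of $\ccal$ or verified case-by-case using the sketch representations in Section \ref{sec:example-sketches}. A secondary subtlety is that $\dcal_2$ is itself a function of the shared randomness $r$ through its termination test, so strictly speaking Theorem \ref{thm:main result} should be applied conditionally on the realization of $\dcal_2$; this is routine because the theorem bounds the Bayes factor pointwise in the output state $s$, and neighboring inputs $\dcal_1, \dcal_1'$ induce the \emph{same} $\dcal_2$, so the conditional bound integrates to the unconditional one.
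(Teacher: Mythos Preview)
Your overall approach matches the paper's: reduce to Theorem \ref{thm:main result} on the concatenated stream $\dcal_1 \circ \dcal_2$, verify the cardinality lower bound via the loop's exit condition $v \ge n_0$, verify the sampling-probability bound by showing $\pi(S \cup T) \le \pi(T) \le \pi_0$, and appeal to post-processing for the returned estimate. Your treatment of the conditioning on $\dcal_2$ is in fact more explicit than the paper's.

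The one place you fall short is the step you yourself flag as the principal obstacle. You propose to treat monotonicity of $\pi$ under insertion either as an additional axiom on $\ccal$ or as a case-by-case check. Neither is necessary: the paper derives it directly from order-invariance and the merge semantics, and this is really the crux of the proof. The argument is short. Suppose a fresh item $i$ (not in $\dcal_S$ or $\dcal_T$) does \emph{not} change $T$, i.e.\ $S_r(\dcal_T \circ i) = S_r(\dcal_T)$. Then by the merge property $S \cup T = S_r(\dcal_S) \cup S_r(\dcal_T \circ i) = S_r(\dcal_S \circ \dcal_T \circ i)$, and by order-invariance this equals $S_r(\dcal_S \circ \dcal_T) = S \cup T$. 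Hence any item that fails to modify $T$ also fails to modify $S \cup T$; taking probabilities gives $\pi(s \cup t) \le \pi(t)$ for every state $t$, with no further hypothesis. Your ``sensitive region can only shrink'' intuition is exactly this, but you should recognize that it is a formal consequence of the class definition rather than an extra assumption to be discharged.
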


\begin{proof}
Given sketches $S,T$ with states $s$ and $t$, respectively,
we claim that any item that does not modify $T$ also cannot modify
the merged sketch $S \cup T$ by the order-invariance of $S, T$.
To see this, let $\dcal_S$ and $\dcal_T$ 
respectively denote the streams
that were processed by sketches $S$ and $T$,
and consider an item $i$ that does
not appear in $\dcal_S$ or $\dcal_T$
and whose insertion into $\dcal_T$ would
not change the sketch $T$. 
Since the state of the sketch $T$ is the same after processing $\dcal_T \circ i$
as it was after processing $\dcal_T$,
$S \cup T$ is also the sketch of $\dcal_T \circ i \circ \dcal_S$,
where $\circ$ denotes stream concatenation.
By order-invariance, $S \cup T$ is also a sketch
for $\dcal_T \circ \dcal_S\circ i$.
Also by order-invariance, 
$S \cup T$ is a sketch for $\dcal_T \circ \dcal_S$.
Hence, we have shown that the insertion of $i$
into $\dcal_T \circ \dcal_S$ does not change
the resulting sketch.

It follows that $\pi(s \cup t) \leq \pi(t) \leq \pi_0$, where the last inequality holds by the stopping
condition of the loop in \texttt{DPInitSketchForMerge} (Algorithm \ref{alg:DPInitSketchForMerge}). 
Hence, \texttt{MakeDP} also satisfies Condition \eqref{eq:downsampling probability2}.
The requirement that $v \geq n_0$ in \texttt{DPInitSketchForMerge} 
also ensures that $S \cup T$ is a sketch of a stream satisfying Condition \eqref{eq:artificial items bound}. Hence, Theorem \ref{thm:main result} implies
that the sketch $S \cup T$ returned by \texttt{MakeDP} satisfies $\epsilon$-DP.
Since the additional value $v$ that affects the estimate returned by \texttt{MakeDP} does not depend on the data, there is no additional privacy loss incurred by returning it.
\end{proof}

\section{Utility Proofs: Section \ref{sec:utility analysis}}
in this section we present the proofs showing that differentially private sketches have the same asymptotic performance as non-private sketches.
Namely, they remain unbiased and have the same variance as the number of unique items in the stream grows.
These results apply to Algorithms \ref{alg:basic}-\ref{alg:dp-any-set} and Algorithm \ref{alg:make dp}.
First we will show unbiasedness.

\begin{myproof}{Theorem}{\ref{thm:unbiased}}{
Trivially, Algorithm \ref{alg:basic} is unbiased by assumption, as it does not modify the base sketch.
Given $V$, there are $Z \sim Binomial(n + V, p)$ items added to the base sketch. 
Since the base sketch's estimator is unbiased, $\E(\hat{N}_{base} | Z) = Z$.
Algorithms \ref{alg:dp-large-set}, \ref{alg:dp-any-set}, and Algorithm \ref{alg:make dp} all have expectation:
\begin{align*}
    \E\, (\hat{N}(S_r(\dcal)) | V) &= \E \left( \E\left(\frac{\hat{N}_{base}}{p} - V \bigg| V,Z \right) \right) \\
    &= \E \left( \frac{Z}{p}-V \bigg| V \right) = n+V-V = n.
\end{align*}
}
\end{myproof}
Next we establish the variance properties of the sketching algorithms.
This involves expressing the variance of estimates from the algorithms in terms of the ``base sketch'' estimator.

\begin{corollary}
The variance of the estimates produced by Algorithms \ref{alg:basic}-\ref{alg:dp-any-set} and \ref{alg:make dp} is given by
\begin{align}
\Var \left( \hat{N}(S_r(\dcal)) \right)
&= \E \left( \Var \left(\frac{\hat{N}_{base}}{p} \bigg| V \right) \right).
\end{align}
\label{cor:var}
\end{corollary}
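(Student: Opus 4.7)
The plan is to apply the law of total variance, conditioning on the number of artificial items $V$, and use the unbiasedness already established in Theorem \ref{thm:unbiased} to kill one of the two terms.

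First I would recall that in all of Algorithms \ref{alg:basic}--\ref{alg:dp-any-set} and \ref{alg:make dp}, the returned estimator has the form $\hat{N}(S_r(\dcal)) = \hat{N}_{base}/p - V$, where $p$ is the downsampling probability (with the convention $p = 1$ and $V = 0$ for Algorithm \ref{alg:basic}, $p = \pi_0$, $V = 0$ for Algorithm \ref{alg:dp-large-set}, $p = \pi_0$, $V = n_0$ for Algorithm \ref{alg:dp-any-set}, and $p = \pi_0$, $V$ a data-independent random variable for Algorithm \ref{alg:make dp}). By the law of total variance applied to the random variable $V$,
\begin{equation*}
\Var\!\left(\hat{N}(S_r(\dcal))\right) = \E\!\left[\Var\!\left(\hat{N}(S_r(\dcal)) \,\big|\, V\right)\right] + \Var\!\left[\E\!\left(\hat{N}(S_r(\dcal)) \,\big|\, V\right)\right].
\end{equation*}

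Next I would simplify each of the two summands. For the inner conditional variance, since $V$ is constant given $V$, we have $\Var(\hat{N}_{base}/p - V \mid V) = \Var(\hat{N}_{base}/p \mid V)$, so the first summand already matches the right-hand side of the claim. For the second summand, the proof of Theorem \ref{thm:unbiased} shows that $\E(\hat{N}(S_r(\dcal)) \mid V) = n$, which is a deterministic constant; hence $\Var[\E(\hat{N}(S_r(\dcal)) \mid V)] = 0$. Combining these two observations yields the stated identity.

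The step that requires a little care is ensuring that the conditional unbiasedness $\E(\hat{N}(S_r(\dcal)) \mid V) = n$ holds for \emph{every} algorithm covered, since only then does the second term in the total variance decomposition vanish. I would revisit the chain of conditioning used in the proof of Theorem \ref{thm:unbiased}: given $V$, the total number of items passed to the base sketch is $Z \sim \mathrm{Binomial}(n + V, p)$, and unbiasedness of $\hat{N}_{base}$ on the base sketch gives $\E(\hat{N}_{base} \mid V, Z) = Z$, so $\E(\hat{N}_{base}/p - V \mid V) = (n+V) - V = n$. For Algorithm \ref{alg:make dp} the same argument applies because, by the construction in \texttt{DPInitSketchForMerge}, $V$ is determined by the internal randomness used to build the noisy companion sketch and is independent of the data stream $\dcal$, so conditioning on $V$ does not disturb the unbiasedness of $\hat{N}_{base}$ on the merged input. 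This is the only mildly subtle point; once it is in place, the corollary follows immediately from total variance.
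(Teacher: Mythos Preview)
Your proposal is correct and follows exactly the paper's own approach: the paper's proof is the one-line remark ``This follows from the law of total variance and the fact that the estimators are unbiased,'' and you have simply spelled out that reasoning in detail. One minor inaccuracy: for Algorithm~\ref{alg:make dp} there is no downsampling of the original stream, so the appropriate convention is $p=1$ (not $p=\pi_0$), but this does not affect your argument since the estimator still has the form $\hat{N}_{base}/p - V$ and the conditional unbiasedness $\E(\hat{N}(S_r(\dcal))\mid V)=n$ goes through unchanged.
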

\begin{proof}
This follows from the law of total variance and the fact that the estimators are unbiased.
\end{proof}
Finally, we can leverage Corollary \ref{cor:var} to show there is no asymptotic increase in variance caused by the extra steps 
added to make a sketch private.

\begin{myproof}{Theorem}{\ref{thm:utility given relative error}}{
Let $Z \sim Binomial(n+v,\pi_0)$ denote the actual number of items inserted into the base sketch.
From Corollary \ref{cor:var} and since $V$ is constant, the variance is
\begin{align*}
    \Var\, \hat{N}(S_r(\dcal)) 
&= \left( \Var \left(\frac{\hat{N}_{base}}{\pi_0} \bigg| V=v \right) \right) \\
&= \left(\frac{\E\, \Var(\hat{N}_{base} | Z) + \Var\, \E(\hat{N}_{base} | Z)}{\pi_0^2}  \right) \\
&\leq \left(\frac{\E\, {Z^2}/{c} + \Var(Z)}{\pi_0^2}  \right) \\
&= \frac{(\E\,Z)^2}{c\pi_0^2} + \frac{\Var(Z)(c+1)}{c\,\pi_0^2}  \\
&= \frac{(n+v)^2}{c} + \frac{(n+v)(1-\pi_0)}{c \pi_0}.
\end{align*}
Trivially,  $\frac{\Var_n(\hat{N})}{\Var_n(\hat{N}_{base,n})} = \frac{(n+v)^2}{n^2} + O(1/n) \to 1 \,\,\mbox{ as } n \to \infty$.
}
\end{myproof}

\begin{corollary}
\label{cor: variance-ratios}
Assume that the conditions of Theorem \ref{thm:utility given relative error} hold. Further assume the base sketching algorithm satisfies an $(\epsilon, \delta_n)$ privacy guarantee where $\delta_n \to 0$ as $n \to \infty$. For any given $n^* > n_0$, we say Algorithm \ref{alg:make dp} succeeded if $V < n^*$. Then with probability at least $1-\delta_{n^*}$
\begin{align*}
\Var_n(\hat{N} | \mathit{Success}) &\leq \frac{(n+{n}^*)^2}{c} + \frac{(n+n^*)(n_0 + \pi_0^{-1})}{k_{max}}
\end{align*}
and
\begin{align*}
\frac{\Var_n(\hat{N}| V)}{\Var_n(\hat{N}_{base,n})} &\stackrel{p}{\to} 1 \,\,\mbox{ as } n \to \infty,
\end{align*}
where $X_n \stackrel{p}{\to} 1$ denotes convergence in probability: $P(|X_n - 1| < \Delta) \to 1$ as $n \to \infty$ for any $\Delta > 0$.
\end{corollary}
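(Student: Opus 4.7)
The plan is to mirror the strategy of Theorem \ref{thm:utility given relative error}, handling the new complication that in Algorithm \ref{alg:make dp} the number of artificial items $V$ is itself random. First I would establish the high-probability bound $\Pr(V \geq n^*) \leq \delta_{n^*}$ directly from the remark immediately preceding the corollary: the $(\epsilon, \delta_{n^*})$-DP hypothesis on the base sketch implies $\delta_{n^*} \geq \Pr_r(\pi(\scal_r(\dcal^*)) > \pi_0)$ for any $\dcal^*$ with $|\dcal^*| = n^*$, and by the stopping rule of \texttt{DPInitSketchForMerge} this latter probability equals $\Pr_r(V > n^*)$. Hence the event $\mathit{Success} = \{V < n^*\}$ has probability at least $1 - \delta_{n^*}$.

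Second, I would condition on $V = v$ and reduce to the setting of Theorem \ref{thm:utility given relative error}. Once $V = v$ is fixed, the merged sketch produced by Algorithm \ref{alg:make dp} has the same distribution as a sketch produced by Algorithm \ref{alg:dp-any-set} with $v$ artificial items inserted, because $T$ uses hashes independent of $S$ and the class $\ccal$ is order- and duplication-invariant (so the merge is determined by the set of hash values collected from the $n$ real items together with the $v$ phantom items). Hence Corollary \ref{cor:var} combined with the conditional version of the calculation in Theorem \ref{thm:utility given relative error} yields
\begin{align*}
\Var_n(\hat{N} \mid V = v) \leq \frac{(n+v)^2}{c} + \frac{(n+v)(v + \pi_0^{-1})}{k_{max}},
\end{align*}
using the relative-error hypothesis $\Var(\hat{N}_{base} \mid Z) \leq Z^2/c$ for $Z \sim \mathrm{Binomial}(n+v,\pi_0)$ and the law of total variance exactly as in that proof. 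Since $n_0 \leq v \leq n^*$ under $\mathit{Success}$ and the right-hand side is non-decreasing in $v$ on this range, conditioning on $\mathit{Success}$ and taking the supremum over such $v$ gives the first displayed inequality in the corollary.

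Third, for the convergence in probability, the crucial observation is that the distribution of $V$ is determined by the initialization of the noisy sketch (the coupon-collector-like process stopping when $\pi(S) \leq \pi_0$ and $v \geq n_0$) and so does not depend on $n$. Consequently $V/n \stackrel{p}{\to} 0$, and the conditional bound above combined with the relative-error lower order of $\Var_n(\hat{N}_{base,n}) = \Theta(n^2)$ gives
\begin{align*}
\frac{\Var_n(\hat{N} \mid V)}{\Var_n(\hat{N}_{base,n})} \leq (1 + V/n)^2 + O(1/n) \stackrel{p}{\to} 1,
\end{align*}
with a matching lower bound coming from the same conditional identity.

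The main obstacle I expect is being precise about two bookkeeping points: (i) verifying that, conditional on $V$, the joint distribution of $S \cup T$ really reduces to the fixed-$v$ setting of Theorem \ref{thm:utility given relative error}, which requires invoking order-invariance and the independence of the phantom hashes from the real-item hashes; and (ii) pairing the upper-bound-only relative-error hypothesis $\Var_n(\hat{N}_{base,n}) < n^2/c$ with enough lower-order information on $\Var_n(\hat{N}_{base,n})$ to conclude that the ratio tends to $1$ rather than merely being bounded above by something tending to $1$.
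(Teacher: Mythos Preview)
Your proposal follows the same route as the paper's own argument, which appears only as an informal outline in Section~\ref{sec:utility analysis} rather than as a written-out proof: bound $V$ with high probability via the $(\epsilon,\delta_{n^*})$-DP guarantee on the base sketch (your step one matches the paper's displayed reasoning $\delta_{n^*} > \Pr_r(\pi(S_r(\dcal^*)) > \pi_0) = \Pr_r(V > n^*)$ verbatim), then feed the bound $v \leq n^*$ into the variance calculation of Theorem~\ref{thm:utility given relative error}.

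One detail to correct in your second step: Algorithm~\ref{alg:make dp} does \emph{not} downsample. Both the existing sketch $S$ and the noisy sketch $T$ built by \texttt{DPInitSketchForMerge} insert items directly, and the returned estimator is $\hat{N}(S\cup T) - V$, i.e.\ $p=1$ in the general form $\hat{N}_{base}/p - V$. So conditional on $V=v$ there is no $Z\sim\mathrm{Binomial}(n+v,\pi_0)$ in play; the merged base sketch simply sees $n+v$ distinct items, and the conditional variance bound is just $\Var_{n+v}(\hat{N}_{base,n+v}) \leq (n+v)^2/c$. This does not derail the argument (if anything it is simpler), but your reduction ``same distribution as Algorithm~\ref{alg:dp-any-set} with $v$ artificial items'' is not literally correct. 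The obstacle you flag in (i)---that conditioning on the stopping time $V=v$ biases the distribution of $T$---is real and is not addressed by the paper either; both treatments are informal on this point.
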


\section{Concrete Examples: Section \ref{sec:example-sketches} }
\label{app: sketch examples}

In this section we provide the proofs for results showing that popular sketches are $(\epsilon, \delta)$-DP.
We also provide further discussion for Adaptive Sampling that is omitted from the main body.

\subsection{FM85}

\begin{myproof}{Theorem}{\ref{thm:fm85-dp}}{
    To obtain an $(\epsilon, \delta)$ guarantee, note that 
    bit $s_{ij}$ in the sketch has probability $2^{-i}/k$ of 
    being selected by any item. 
    If $v = \lceil -\log_2 \pi_0 \rceil$ and all bits $s_{ij}$ with $j \leq v$ are flipped, then $\pi(s) < \pi_0$. 
    The probability $\Pr_r(s_{ij} = 0) = (1-2^{-i}/k)^n \leq \exp(-2^{-i}n/k)$.
    A union bound gives that
    $\Pr_r(\pi(\scal_r(\dcal)) \geq \pi_0) \leq k \sum_{i=1}^v \exp(-2^{-i}n/k) \leq kv \exp(-2^{-v}n/k) = kv \exp\left(- \tilde{\pi}_0 \frac{n}{k} \right)$
    where $\tilde{\pi}_0 = 2^{-v} \leq \pi_0$.
    }
\end{myproof}

Recall that the quantity $k_{max}$ for \acrshort{fm} is larger than all other sketches by either $32$ or $64$, the number of bits used in the hash function.
Thus, \acrshort{fm} requires a larger minimum number of items in the sketch to guarantee privacy, see Equation \eqref{eq:artificial items bound}.
However, the sketch is highly compressible as, for large $n$, each bitmap has 
entropy of approximately $4.7 $ bits \cite{lang2017back}. 
More recent works have placed this numerical result
on firmer theoretical footing
\cite{pettie2021information}, and in fact shown
that the resulting
space-vs.-error tradeoff is essentially \emph{optimal} amongst a large
class of sketching algorithms. 
A practical implementation of the compressed sketch can be found in the Apache DataSketches library \cite{datasketches}.\footnote{It is referred to as CPC, short for compressed probabilistic counting.}
It achieves close to constant update time by buffering
stream elements and only decompressing the sketch when the buffer is full.

Our results imply the above compressed sketches can yield a relaxed $(\epsilon, \delta_n)$-differential privacy guarantee when the number of inserted items is 
$n \geq n_0$ (Equation \eqref{eq:number-of-items-added}).
If the size of the sketch in bits is $b$, the sketch is $\epsilon$-differentially private if 
$n > \frac{b}{1-\exp(-\epsilon)}$ or equivalently
$b < n(1-\exp(-\epsilon))$.
Thus, $\delta_n = \Pr_r(b \geq n(1-\exp(-\epsilon))$.

\subsection{Linear Probabilistic Counting}
\label{app: lpca}

\begin{myproof}{Theorem}{\ref{thm:lpc-dp}}{
$S_r(D)$ is not privacy violating (i.e., $\pi(s) < \pi_0$) if
$\pi(\scal_r(\dcal)) = p(1-B/k) < \pi_0$ or, equivalently,
$B > k(1-\pi_0/p)$. Note that
$G_i \sim Geometric( p(1-i/k))$ items must be added for the number of 
filled bits to go from $i$ to $i+1$.

We can use a tail bound for the sum of geometric random variables \cite{janson2018tail}. Assume that $n \geq \frac{k-1}{1-\exp(-\epsilon)} \geq n_0$ so that Condition 
\ref{eq:artificial items bound} is satisfied. If $n > \mu_0$ then 
\begin{align}
    \delta &\leq P\left(\sum_{i=0}^{b_0} G_i > n\right) \\
    &\leq \exp\left(-\tilde{\pi}_0 (n/\mu_0 -1 -\log (n/\mu_0)) \right) \\
    &= \frac{\mu_0}{n} \exp\left(-\tilde{\pi}_0 (n/\mu_0 -1) \right). \\
\end{align}

The number of expected items needed to fill $b_0$ bits if $b_0 \geq 1$ is 
\begin{align}
    \mu_0 &:= \sum_{i=0}^{b_0-1} \frac{1}{p}\frac{1}{1-i/k} \\
    &< \frac{k}{p}  \left(\log\left(\frac{k}{k-b_0}\right) +1/(2 k) - 1/(2 (k-b_0)) +\frac{1}{12(k-b_0)^2}\right) \\
    &= \frac{k}{p} \log\left(\frac{k}{k-b_0}\right) - \frac{1}{p} \frac{b_0}{2(k-b_0)}
    + \frac{1}{p}\frac{k}{12(k-b_0)^2} \\
    &=  \frac{k}{p} \log\left(\frac{k}{k-b_0}\right) - \frac{1}{p} \frac{6 b_0(k-b_0) - k}{12(k-b_0)^2} \\
    &< \frac{k}{p} \log\left(\frac{k}{k-b_0}\right). 
\end{align}
}
\end{myproof}

\subsection{Bottom-$k$ / KMV}
\begin{myproof}{Theorem}{\ref{thm:kmv-dp}}{
The value $\pi(s)$ is equal to the $k^{th}$ minimum value.
If $X > k$, then the $k^{th}$ minimum value is $< \pi_0$ and Condition \ref{eq:downsampling probability2} is satisfied.
Thus, $\delta$ is the failure probability. The bound follows directly from Bernstein's inequality:
\begin{align}
    P(X \leq k) &= P(n - X > n - k) = P( (n-X) - n(1-\pi_0) > n \pi_0 - k) \\
    &\leq \frac{1}{2}\frac{(n\pi_0 - k)^2}{n\pi_0(1-\pi_0) + 1/3}\\
    &= \frac{1}{2} \frac{(\pi_0 - k/n)^2}{\pi_0(1-\pi_0) + \sfrac{1}{3n^2}} n
\end{align}
}
\end{myproof}

\subsection{Adaptive Sampling}
Wegman's adaptive sampling is similar to the bottom-$k$ sketch but does not require the sketch to store exactly $k$ hashes. 
Instead, it maintains a threshold $p$ and stores all hash values less than $p$.
Whenever the sketch size exceeds $k$, then the threshold is cut in half and only values under the threshold are retained. This ensures that processing a stream of length $n$ takes expected $O(n)$ time rather than $O(n \log k)$ as in a max-heap-based implementation of Bottom-$k$.

It is order invariant since the sketch only depends on the number of hash values under each of the potential thresholds and not the insertion order. 
Since at most $k$ hashes are stored, and the sketch is modified only if one of these hashes is removed, like KMV a.k.a. Bottom-$k$, 
the maximum number of items that can modify the sketch by removal is $k_{max} = k$.
In Corollary \ref{cor:ads-dp} we showed that Adaptive Sampling with $k$ buckets has the same privacy
behavior as a Bottom-$k$ sketch.

\begin{myproof}{Corollary}{\ref{cor:ads-dp}}{
Consider sketches $\scal^{AT}_r(\dcal)$, $\scal^{KMV}_r(\dcal)$ using the same hash function. 
Since the threshold in adaptive sampling is at most the $k^{th}$ minimum value, 
$\pi(\scal^{AT}_r(\dcal)) \leq \pi(\scal^{KMV}_r(\dcal))$.
So $\pi(\scal^{KMV}_r(\dcal)) < \pi_0 \implies \pi(\scal^{AT}_r(\dcal)) < \pi_0$.
}
\end{myproof}

\subsection{HyperLogLog}
\begin{myproof}{Theorem}{\ref{thm:hll-dp}}{
    In HLL, the sampling probability $\pi(s) = k^{-1} \sum_{i=1}^k 2^{-s_i}$ here $s_i$ is the value in each bin. 
    Thus, if all bins have value $s_i > -\log_2 \pi_0$, then $\pi(s) < \pi_0$.
    Let $C_i$ be the event that $s_i > -\log_2 \pi_0$.
    Then $P(\neg C_i | n) \leq (1-\pi_0 / k)^n \leq \exp(-\pi_0 n/k)$.
    A union bound gives $\Pr_r(\pi(S_r(\dcal)) \geq \pi_0) \leq k \exp(-\pi_0 n/k)$.
}
\end{myproof}

\begin{table}[]
\caption{
Comparison of the utility bounds for related work.
The only corresponding result to ours in \cite{choi2020differentially} is Theorem $4.2$, which shows that the 
LogLog sketch is $(\epsilon, \delta')$-DP provided $n \ge n_0'$ for $n'_0$ at least a factor $8$ larger than our 
$n_0$ from \eqref{eq:number-of-items-added}.
Our approaches, Algorithms \ref{alg:basic}-\ref{alg:dp-any-set} simultaneously achieve the best utility and update time,as well as a tighter privacy bound.
} 
\label{tab: related-work-bounds}
\centering
\begin{tabular}{@{}llll@{}}
\toprule
Algorithm                         & Privacy                                        & Utility (Relative Error): $\gamma$      & Update Time \\ \midrule
1a                                & $(\epsilon, \delta)$ for $n \ge n_0$           & $\frac{1.04}{\sqrt{k}}$                 & $O(1)$            \\
1b, 1c                            & $\epsilon$-DP for $n \ge n_0, 1$, respectively & $\frac{1.04}{\sqrt{k}}$                 & $O(1)$      \\
\cite{smith2020flajolet}          & $(\epsilon, 0)$ and $(\epsilon, \delta)$-DP    & $\frac{10 \ln(1/\beta)^{1/4}}{\sqrt{k}}$& $O(k)$      \\
\cite{choi2020differentially}     & $(\epsilon, \delta)$ for $n \ge n_0' \ge 8 n_0$& $\frac{1.3}{\sqrt{k}}$                 & $O(1)$      \\ \bottomrule
\end{tabular}
\label{tab:sketch-privacy-utility}
\end{table}

\section{Further Empirical Details}
\label{app:exptdetails}
\paragraph{Experiment 1: Update Time.}
We implement the regular, non-private HLL using a $32$-bit non-cryptographic MurmurHash. 
Our Private HLL (PHLL) is implemented in the Algorithm \ref{alg:dp-large-set} setting with a $256$-bit cryptographic hash function, SHA-$256$. 
In this model, PHLL employs the same algorithm as HLL but uses an extra downsampling step and rescales the estimator.
QLL, the $\epsilon$-DP algorithm of \cite{smith2020flajolet} also uses SHA-$256$.
We record the time to populate every sketch with $2^{10}$ updates with $k \in \{2^7, 2^8, \dots 2^{12} \}$ buckets.
Each marker represents the mean update time over all updates and the curves are the evaluated mean update time over $10$ trials.

We only implement \cite{smith2020flajolet} with $\gamma = 1.0$ because the running time is independent of $\gamma$.
Since the running time is independent of the privacy parameter for all methods, we test the total privacy budget $\epsilon = 1.0$.
All methods are implemented in Python so the speed performance could be optimized using a lower-level implementation.
In absolute terms, we find that for $2^{10}$ updates, HLL needs $8 \times 10^{-6}$ seconds compared to $9 \times 10^{-6}$ seconds for PHLL.
Both methods have a standard deviation of about $5 \times 10^{-7}$.
On the other hand, \cite{smith2020flajolet} needs between $2 \times 10^{-4}$ and $5 \times 10^{-3}$ seconds with a standard deviation on the order of $10^{-7}$ which is imperceptible on the scale of Figure \ref{fig:timing-comparison}.

\paragraph{Experiment 2: Utility-space tradeoff (Figure \ref{fig:utility-space}).}
We simulate each of the algorithms (HLL, PHLL, QLL) on an input stream of cardinality $n=2^{20}$ over $100$ independent trials.
The privacy budget is fixed at $\epsilon = \ln(2)$ meaning $\pi_0 = 1/2$ \eqref{eq:downsampling probability}.
The number of buckets was varied in  $k \in \{2^7, 2^8, \dots 2^{12} \}$ and we record the simulated 
$\textbf{total sketch size} \defn k \cdot \log_2 \max_i s_i$, which is the space for the largest bin value $s_i$ for a sketch with $k$ buckets. 
In this example, $k_{max} = k$ for HLL (Table \ref{tab:sketch comparison}) so that 
\eqref{eq:artificial items bound} $n_0 = 2(k - 1)$. 
As $n \ge n_0$ PHLL is $\epsilon$-DP in this setting by Theorem \ref{thm:main result}.
Essentially, PHLL under Algorithm \ref{alg:dp-large-set} employs HLL as the base sketch with a stream downsampled by $\pi_0 = 1/2$.

For QLL, the relative error $\gamma_{QLL} = \frac{10 \ln(1/\beta)^{0.25}}{\sqrt{k}}$ \cite[Theorem 2.6]{smith2020flajolet} depends on the 
sketching failure probability $\beta$ and the number of buckets.
This is a factor $10 \ln(1/\beta)^{0.25} / 1.04$ worse than our bound for $\gamma_{PHLL}$ and occurs with probability at least $1-\beta$.
We set $\beta = 0.05$ so that $\gamma_{QLL} \approx 7.49 / \sqrt{k}$ for \cite{smith2020flajolet},
while our method has $\gamma_{PHLL} = 1.04 / \sqrt{k}$.
Only the base-$(1+\gamma_{QLL})$ harmonic estimator was tested for closest comparison to HLL and our work.

Figure \ref{fig:utility-space} plots the utility in relative error against the total space usage of the methods.
We see that PHLL is indistinguishable from HLL.
The utility of QLL appears comparable to (P)HLL and is better than its worst-case relative error guarantees, 
yet this comes at the cost of using more space than our sketch, PHLL.
In absolute terms PHLL consumes approximately $90\%$ of the space used by QLL when $k=2^7$, which decreases to roughly $65\%$ when $k=2^{12}$.
The fractional reduction in space usage of PHLL over QLL is because $\gamma$ decreases as $k$ grows
and reducing $\gamma$ affects the $Geometric(\frac{\gamma}{1+\gamma})$ hash function used to
select the bin values in QLL; smaller $\gamma$ result in larger bin values which inflate the size of the QLL sketch.
In fact, we find that the mean total space usage of QLL compared to PHLL, is larger by a factor of $O(\log k)$.
This agrees with the ratio of theoretical space bounds of \cite{smith2020flajolet} and our work, as illustrated in Figure \ref{fig:space-ratio}.

\begin{figure*}[b]
    \centering
    \input{simulation.tex}
    \caption{  
    The curves represent empirical standard deviations of the estimates.
    For our method, PHLL, this matches the error bound $1.04 / \sqrt{k}$ as indicated from Table 
    \ref{tab:sketch comparison}.
    Empirically, QLL has a nearly matching standard deviation to PHLL, despite a suboptimal utility
    bound as seen in Table \ref{tab:sketch-privacy-utility}.}
    \label{fig:utility-space}
\end{figure*}

\end{document}